\newtheorem{theorem}{Theorem}[section]
\newtheorem{proposition}[theorem]{Proposition}
\newtheorem{corollary}[theorem]{Corollary}
\theoremstyle{definition}
\newtheorem{definition}[theorem]{Definition}
\newtheorem{remark}[theorem]{Remark}
\newtheorem{example}[theorem]{Example}
\def\cal{\mathcal}
\title[Analysis of inertial waves]{Mathematical analysis of inertial waves
in rectangular basins with one sloping boundary}
\author{S.\,D. Troitskaya}
\email{troitsks@gmail.com}
\address{Laboratory of Mathematical Education, Institute on Educational Content and Methods
of the Russian Academy of Education, Moscow,  105062 Russia\\
Moscow State University, Moscow,  119991 Russia}
\begin{document}

\begin{abstract}
Here we consider the problem of small oscillations of a rotating inviscid
incompressible fluid.

From a mathematical point of view, new exact solutions to the two-dimensional
Poincar\'e-Sobolev equation in a class of domains including trapezoid are found
in an explicit  form and their main properties are described.
These solutions correspond to the absolutely continuous spectrum of a linear operator
that is associated with this system of equations.

For specialists in Astrophysics and Geophysics the existence of these solutions
signifies the existence of some previously unknown type of inertial waves corresponding
to the continuous spectrum of inertial oscillations. A fundamental distinction
between monochromatic inertial waves and  waves of the new type is shown:
usual characteristics (frequency, amplitude, wave vector,  dispersion relation,
direction of energy propagation, and so on) are not applicable to the last.
Main properties of these waves are described. In particular it is proved
that they are progressive.  Main features of their energy transfer are described.
The existence of such inertial waves enables us to explain in a new way
a lot of experimental data that were obtained in Geophysics in the past two decades
and to predict  the occurrence of such oscillations in natural waters.
\end{abstract}

\keywords{Rotating fluid, Poincar\'e-Sobolev equation,
generalized solutions, inertial waves, trapezoidal tank, energy transfer, wave attractor.}
\subjclass[2000]{35L, 
47A11,
76U05.}

\maketitle

\section{Introduction}
The paper was inspired by numerous articles in Geophysics and Astrophysics
of the past two decades  which are devoted to the  phenomenon  of the kinetic energy localization of internal waves
in rotating fluids, in stratified fluids, in rotating stratified fluids,
i.e. of the phenomenon of emergence of so-called \textit{wave attractors}
(see, e.g., \cite{MaasBenielliNature97,RieutordValdet00,StaquetSommeria2002,HarlanMaas07,%
SwartSleijpenMaasBrandts2007,Harland08p1,GerkemaZimmerman2008,GrisouardStaquetPairaud2008,GerkemaMaas2013,%
ScolanErmDauxois2013}).
Here we will describe briefly the situation by the example of rotating fluids.

 The motion of an ideal incompressible fluid contained in a region $G$, which
 rotates with constant velocity about the axis $\mathbf{k}$, is governed by the equations:
\begin{equation}\label{0gl.1p.04aA}
\mathbf{U}_{t}+\varepsilon\mathbf{U}\cdot \nabla \mathbf{U}+
2\mathbf{k}\times \mathbf{U}= - \nabla p
\qquad  \mbox{in }G,
\end{equation}
$$
\nabla\cdot \mathbf{U}=0 \qquad  \mbox{in }G,
$$
$$
\bigl.\bigl( \mathbf{U}\cdot \mathbf{n}\bigr)\bigr|_{\partial G}=0.
$$
Here $\mathbf{U}=(u,\,v,\,w)$ is the velocity field in the rotating coordinate system
rigidly connected to the container $G$, $p$ is the hydrodynamic pressure,  $\mathbf{n}$
is the unit outward normal to the boundary $\partial G$ and $\varepsilon$ is constant (Rossby number)
(see \cite{GreenspanBook68}).
The linearization of (\ref{0gl.1p.04aA})
 in a neighborhood of the solution corresponding to the rotation of the fluid as
a rigid body reduces the previous system  to
\begin{equation}\label{0gl.1p.04a}
\mathbf{U}_{t}+
2\mathbf{k}\times \mathbf{U}= - \nabla p
\qquad  \mbox{in }G,
\end{equation}
\begin{equation}\label{0gl.1p.05a}
\nabla\cdot \mathbf{U}=0 \qquad  \mbox{in }G,
\end{equation}
\begin{equation}\label{0gl.1p.06a}
\bigl.\bigl( \mathbf{U}\cdot \mathbf{n}\bigr)\bigr|_{\partial G}=0.
\end{equation}
Solutions $\mathbf{U}$ to this system are called \emph{inertial waves}.
If a solution has the form $e^{-i\mu t}\mathbf{U}(x,y,z)$,
then it is called  an \emph{inertial mode} with the natural frequency $\mu $.
The study of this problem was initiated by H. Poincar\'e
in \cite{Poin85}, and then continued by many authors.

Below without lost of generality we may take
$\mathbf{k}=(0,0, \frac 12)$.

It is known that properties of solutions of
(\ref{0gl.1p.04a}--\ref{0gl.1p.06a}) strongly
depend on the shape of the container $G$.
According to numerous papers, one of containers where
the phenomenon of the energy localization of the inertial waves can take place,
is a cylindrical tank which is highly elongated in the direction $Oy$
and is such that its linear dimensions are small in comparison with its distance from the axis of rotation.
So it is reasonable to assume that $G$ is an infinitely long
 cylinder: $G = \{(x, y, z)|(x, z) \in D, y \in \mathbf{R}\}$,
and that the components of the velocity $ \mathbf{U}=(u,\,v,\,w)$ and the pressure $p$ depend
only  on time $t$ and two spatial variables $x$ and $z$
(see \cite{Barcilon68}\footnote{
It was shown in \cite{Masl68eng} that the study of this two-dimensional problem
is very important for predicting properties of solutions
of the three-dimensional problem (\ref {0gl.1p.04a}--\ref {0gl.1p.06a})
in cylindrical domains with finite lengths.}).
Therefore, the considered initial-boundary value problem has the form:
\begin{equation}\label{0gl.1p.4}
\frac {\partial u}{\partial t}= v - \frac {\partial p}{\partial x},\,\,
\frac {\partial v}{\partial t}= - u, \,\, \frac {\partial w}{\partial t}
= - \frac {\partial p}{\partial z}, \quad
(x, z) \in D,
\end{equation}
\begin{equation}\label{0gl.1p.6}
\frac {\partial u}{\partial x}+\frac {\partial w}{\partial z}=0 \qquad\mbox{in }D,
\end{equation}
\begin{equation}\label{0gl.1p.10}
\bigl.\bigl( u n_1+w n_3\bigr)\bigr|_{\partial D}=0,
\end{equation}
\begin{equation}\label{init1}
u|_{t=0}=u_0,\quad v|_{t=0}=v_0,\quad
w|_{t=0}=w_0,
\end{equation}
where
$\mathbf{n}=(n_1,\,n_3)$ is the unit outward normal to ${\partial D}$ in the plane $Oxz$.

It is known that in this case the stream function $\psi$
corresponding to (\ref{0gl.1p.4}--\ref{init1}) exists and $\psi$
is a solution to the following initial boundary value problem:
\begin{equation}\label{0gl.1p.16}
\frac {\partial^2 }{\partial t^2}\left( \frac {\partial^2 \psi}{\partial x^2}+
\frac {\partial^2 \psi}{\partial z^2} \right)+
\frac {\partial^2 \psi}{\partial z^2}=0,\footnote{The equation (\ref{0gl.1p.16}) is often called the
(two-dimensional) Poincar\'e-Sobolev equation.}\quad
(x,z;\,t)\in D\times (0,\infty),
\end{equation}
\begin{equation}\label{0gl.1p.18}
\psi|_{\partial D\times (0,\infty)}=0,
\end{equation}
\begin{equation}\label{05gl.1p.20}
\psi|_{t=0}=\psi_0,\quad \psi_t|_{t=0}=\psi_1,\quad
(\psi_0|_{\partial D}=0, \psi_1|_{\partial D}=0).
\end{equation}
Papers of many authors are devoted to the study of this problem, however,
all the studies are far from being complete.
Until now exact solutions in an explicit form have been
found  for a very small number of configurations. One of the reasons for this situation
is that the problem of finding  inertial modes for this system
leads to the problem of the existence of nontrivial solutions to the Dirichlet problem for
 the string oscillation equation which is known for its complexity:
\begin{equation}\label{10}
\frac {\partial^2 \psi}{\partial x^2}-\frac 1{a^2} \frac {\partial^2 \psi}{\partial
z^2}=0, \quad (x,z) \in D, \quad a^2=\frac{\lambda}{1-\lambda}, \quad \lambda \in (0,1),
\end{equation}
\begin{equation}\label{06gl.1p.18}
\psi|_{\partial D}=0.
\end{equation}

It is known that for an arbitrary solution of (\ref{0gl.1p.04a}--\ref{0gl.1p.06a}),
the law of conservation of kinetic energy holds (see, for example,
\cite{KopachevskyKrein2001engV1}).
For solutions of (\ref{0gl.1p.4}--\ref{init1}) it takes the following form:
\begin{equation}\label{energyLaw2dim}
{\cal E}(t,D):=\int\limits_D \left(\left|u\right|^2+
\left|v\right|^2+\left|w\right|^2\right)  dxdz={\rm const}.
\end{equation}

The problem that was considered in the cited papers, is that for some domains $D$,
there exist some points or lines (wave attractors)  in $\overline{D}=D \cup \partial D$,
such that in the course of time a certain quantity $K>0$ of the kinetic energy
of the inertial waves concentrates within arbitrarily small neighborhoods of these points (or lines).
In general this phenomenon has a destructive nature.
An analogous problem arises in stratified fluids and in stratified rotating fluids,
and it is not surprising because it is well-known that the equations
that describe internal waves in such fluids, in many aspects are similar to the above equations
 for inertial waves (see, for example, \cite{Lamb32,GabovSveshn1986}).

If $D$ is a rectangle or an ellipse whose axes of symmetry are parallel to the coordinate axes then
(\ref{0gl.1p.4}--\ref{init1}) possess a full system of eigenmodes and
all inertial waves are almost periodic functions in time
\cite{Alexandryan1949eng,Alexandryan1960eng,Denchev1eng59}.
Thus wave attractors can not occur in these domains.

For the first time the existence of
some special types of oscillations was noted in geophysical
papers \cite{Green69,Wunsch1968,Wunsch1969}.
In particular, in \cite{Wunsch1968,Wunsch1969} the author described an experiment  that  proved
the possibility of the energy localization of the stratified fluid in a cylindrical tank having
the described above configuration with some right triangle as a cross-section $D$.
In \cite{TroitRGMF10,TroitBull_const_10,TroitBull_prop_10}
the existence of such oscillations in a rotating fluid has been
mathematically proved:
namely, for such right triangles  some class of exact solutions to the problem
(\ref{0gl.1p.4}--\ref{0gl.1p.10}) was found in an explicit form, and it was proved
that these oscillations are such that in the course of time, the energy of the initial state
of the fluid turns out to be almost completely concentrated in arbitrarily small neighborhoods of
the vertices  of the triangle. Thus, according to  the current terminology
in Geophysics and Astrophysics, these vertices may be called \textit{point wave attractors}.

Many theoretical and experimental geophysical
papers appeared in the past two decades which claim that an area where
a wave attractor may occur in a rotating fluid or in a stably stratified fluid,
is a cylindrical tank with the trapezoidal cross-section of the form
\begin{equation}\label{trapezoidD}
D=\{(x,z):-1<z<0,\quad -1<x<z+1\}.
\end{equation}
This tank is usually called a``rectangular basin with one sloping boundary''. Here
the concrete linear dimensions of $D$ are not important, but it is important that
$D$ is a non-isosceles trapezoid whose base is parallel to the axis $Ox$.

Below (unless otherwise specified) we consider namely this domain $D$.

It is easy to establish that for each
 $\lambda \in (\frac 12,\frac 45)$
 there exists a unique parallelogram $P(\lambda)$,
inscribed in $D$ in such a way that its sides are parallel
to characteristics of (\ref{10}) (see \textbf{Figure \ref{fig:ma1}}a).).
Denote by $S(\lambda)$ the boundary of $P(\lambda)$.
In \cite{MandersMaas2003} it was obtained experimentally, that
under certain perturbations of the uniform rotation, the motion of the fluid particles
in the rotating coordinate system
is intensive only in small neighborhoods of $S(\lambda)$
(where $S(\lambda)$ corresponds to some value $\lambda$ of the considered interval)
and that in other areas of the trapezoid the motion of particles is practically absent
(see Figure \ref{fig:ma1}a).).
An analogous result for stratified fluids was obtained in \cite{MaasBenielliNature97}.
\begin{figure}[ht]
\begin{center}
\includegraphics[height=3.7cm]{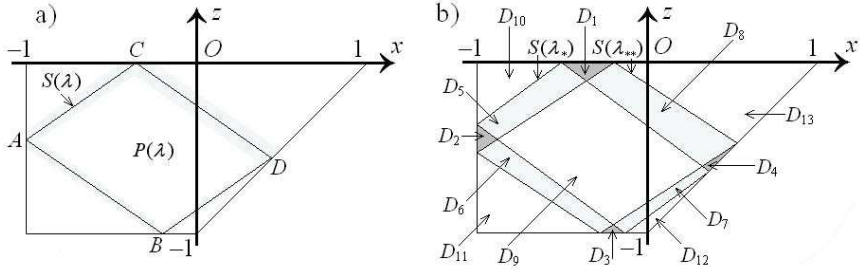}
\caption{a). For $\lambda \in (\frac 12,\frac 45)$,
a unique parallelogram $P(\lambda)$ can be constructed such that it is
inscribed in $D$ and its sides are parallel to the straight lines of the type
$x+az=C_1$, $x-az=C_2$, where $a=\sqrt{\frac{\lambda}{1-\lambda}}$.
\newline
b). For $1/2<\lambda_*<\lambda_{**}<4/5 $,
 $S(\lambda_ *) $ and $S(\lambda_{* *}) $ divide $D$
into subdomains $ D_i $, $i=1, 2,..,13$.
}
\label{fig:ma1}
\end{center}
\end{figure}

These facts as well as a preliminary study of the problem by the
``rays method''  led the researchers to assume  that there are solutions of the problem
(\ref {0gl.1p.16}--\ref {05gl.1p.20}) localizing their energy
in an arbitrarily small neighborhood of $ S(\lambda) $
in process of time; i.e., that $ S(\lambda) $ is a wave attractor
(see, e.g., \cite{MaasBenielliNature97,%
MaasFlMech2001,%
StaquetSommeria2002,%
MandersMaas2003,%
MandersMaas2004,%
MaasChaos2005,%
HarlanMaas07,%
HazewinkelMaasDalziel2007,%
SwartSleijpenMaasBrandts2007,%
GerkemaZimmerman2008,%
GerkZimmMaasHaren2008,%
GerkemaMaas2013,%
GrisouardStaquetPairaud2008,%
Harland08p1,%
LamMaas2008,%
HazewinkelBreevoortDalzielMaas2008,%
Maas2009PhysD,%
HazewinkelTsimitriMaasDalziel2010,HazewinkelGrisouardDalziel2011,%
ScolanErmDauxois2013}).

In the present paper, exact solutions to the problem (\ref{0gl.1p.16}--\ref{05gl.1p.20})
were found in an explicit  form for a class of domains including trapezoid $D$.
The construction of the solutions corresponds to Sobolev's ideas
of investigating of this problem and is based on the construction of
piecewise constant functions which satisfy (\ref{10}) and (\ref{06gl.1p.18})
in a generalized sense\footnote{
For the first time, similar piecewise solutions
of (\ref{10}-\ref{06gl.1p.18}) were used by R. A. Alexandryan when
studying the problem (\ref{0gl.1p.16}--\ref{05gl.1p.20})
for some special class of domains
\cite{Alexandryan1949eng}.}.
The explicit form of the solutions of (\ref{0gl.1p.16}--\ref{05gl.1p.20})
makes it possible to construct some special previously unknown type of inertial waves and to investigate
their properties. These waves are not inertial modes:
they correspond to the continuous spectrum of inertial oscillations.
Main properties of these waves are described.
These waves are progressive; the main features of their energy transfer are described.
The existence of these solutions casts doubt on the existence of non-point
wave attractors in the tank described above.

The results of the paper were partially announced in \cite{TroitKonfTurb2013}.

\section{Exact Solutions}
Denote by $H^1_0(D)$ the subspace of the Sobolev space $H^1(D)$
(see, for example, \cite{KopachevskyKrein2001engV1}, p. 34)
consisting of functions vanishing on the boundary $\partial D$, with the inner product
\begin{equation}\label{05gl.1p.22}
(f,g)_{1}=\int\!\!\!\int\limits_D\left(
\nabla f,\nabla \overline {g}\right) dx\,dz.
 \end{equation}
 On the interval $(0, + \infty)$,
consider functions $\psi(x,z;\,t)$ with values in $H^1_0(D)$ such that
$\psi_{t}$, $\psi_{tt}$ belong to $H^1_0(D)$
(by the derivative $\psi_{t}$, we mean the limit of $\Delta \psi /\Delta t $ as $\Delta t \to 0 $
in $H^1_0(D)$--norm). Suppose $\psi_0,\,\psi_1$ belong to $H^1_0(D)$.
As usual, $\psi(x,z;\,t)$ is called
a \emph{generalized solution} to (\ref{0gl.1p.16}--\ref{05gl.1p.20})
if it satisfies (\ref{05gl.1p.20}), and
for any smooth compactly supported in $D$ function $\varphi$, the equality
\begin{equation}\label{05gl.1p.24}
\int\limits_D(\psi_{ttx}\varphi_x+\psi_{ttz}\varphi_z+\psi_z\varphi_z) dxdz=0
\end{equation}
holds  for all $t>0$.
It is proved in \cite{KopachevskyKrein2001engV1,Ralston73,Sob54eng}
that just these solutions are physically meaningful.
If a generalized solution $\psi$ is sufficiently smooth, then it is a classical solution.

Consider the operator $\mathbf{A}_0:H^1_0(D)\rightarrow H^1_0(D)$ defined on smooth
compactly supported in $D$
functions as the solution to the problem
\begin{equation}\label{defA0}
\Delta\mathbf{A}_0 h = h_{zz},\quad
\bigl.\mathbf{A}_0 h \bigr|_{\partial D}=0,
\end{equation}
and its graph closure $\mathbf{A}:H^1_0(D)\rightarrow H^1_0(D)$.
It is well known that regardless of the shape of $D$ the operator
$\mathbf{A}$ is a bounded self-adjoint operator and the spectrum of
$\mathbf{A}$ is $[0,\,1]$ (see, e.g., \cite{Ralston73}).
Using $\mathbf{A}$ the problem (\ref{0gl.1p.16}-\ref{05gl.1p.20})
may be written in the form:
\begin{equation}
 \psi_{tt}=-\mathbf{A} \psi, \quad \psi|_{t=0}=\psi_0,\quad \psi_t|_{t=0}=\psi_1.
\end{equation}

It is easy to calculate that the equations of the sides $z=z_i(x,\lambda)$ of $S(\lambda)$ are:
\[
CA:\quad z=1+\frac xa+\frac 1a-a,\quad
AB:\quad z=1-\frac xa-\frac 1a-a,
\]
\[
BD:\quad z=\frac xa+\frac 1a+a-3,\quad
DC:\quad z=-\frac xa-\frac 1a+a-1,
\]
where
$a=\sqrt{\frac{\lambda}{1-\lambda}}$, $\lambda \in (\frac 12, \frac 45)$
(see Figure \ref{fig:ma1}a)).

Denote by $C^1[c,d]$ the class of functions that are continuous together
with their first derivatives on an interval $[c,d]$.
\begin{theorem}\label{teo:1.1}
For each $\lambda \in (\frac 12,\frac 45)$,
 define the function
\begin{equation} 
\chi(x,z,\lambda):=\left\{
\begin{array}{ll}
1, & \quad (x,z)\in P(\lambda),\\
0, & \quad (x,z)\in D\setminus P(\lambda).
\end{array}
\right.
 \end{equation}
Then for any $\lambda_*,\,\lambda_ {**}$ satisfying the condition
\begin{equation}\label{12v45}
\frac 12<\lambda_*<\lambda_{**}<\frac 45
 \end{equation}
and for any $\sigma(\lambda)\in C^1[\frac 12,\frac 45]$
\footnote{Of course, this condition can be weakened. },
 the function
\begin{equation}\label{V}
\Upsilon(x,z;\sigma;\lambda_*,\lambda_{**}):=\int\limits_{\lambda_*}^{\lambda_{**}}
\sigma(\mu)
\chi(x,z,\mu) d\mu
\end{equation}
belongs to $H_0^1(D)$.
\end{theorem}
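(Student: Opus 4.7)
The plan is to show $\Upsilon\in H^1_0(D)$ by establishing (i) $\Upsilon\in L^\infty(D)\subset L^2(D)$; (ii) the distributional derivatives $\Upsilon_x,\Upsilon_z$ lie in $L^\infty(D)\subset L^2(D)$; and (iii) $\Upsilon$ vanishes pointwise on $\partial D$. Step (i) is immediate since $\sigma$ is bounded on $[\tfrac12,\tfrac45]$ and $\chi\in\{0,1\}$, yielding $|\Upsilon(x,z)|\le\|\sigma\|_\infty(\lambda_{**}-\lambda_*)$.

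For (ii), I take a test function $\varphi\in C_c^\infty(D)$, exchange the order of integration by Fubini, and apply Green's theorem on each $P(\mu)$:
\[
\int_D\Upsilon\,\varphi_x\,dx\,dz=\int_{\lambda_*}^{\lambda_{**}}\sigma(\mu)\oint_{\partial P(\mu)}\varphi\,n_1\,ds\,d\mu.
\]
Splitting the contour into the four sides $CA,AB,BD,DC$ and using their explicit equations (for instance $z=1+(x+1)/a-a$ on $CA$, with $a=\sqrt{\mu/(1-\mu)}$), I parametrize each side by $x$ (or $z$) and then change variables $(\mu,x)\mapsto(x,z)$. The Jacobian factors as $|\partial z/\partial a|\cdot|\partial a/\partial\mu|$; on each of the four sides one checks that it is bounded below by a positive constant depending only on $\lambda_*,\lambda_{**}$, so the change of variables is a bi-Lipschitz diffeomorphism from the parameter strip onto a subregion $R_\bullet\subset D$. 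After substitution, each of the four contributions takes the form $-\int_D\varphi\,h_\bullet\,dx\,dz$ with $h_\bullet\in L^\infty(D)$ supported on $R_\bullet$. Summing, $\int_D\Upsilon\,\varphi_x\,dx\,dz=-\int_D\varphi\,g\,dx\,dz$ with $g\in L^\infty(D)$, identifying $\Upsilon_x=g\in L^2(D)$. The same reasoning treats $\Upsilon_z$, so $\Upsilon\in W^{1,\infty}(D)\subset C(\overline D)\cap H^1(D)$.

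For (iii), a direct computation of the four vertices $A,B,C,D$ of $P(\mu)$ shows that they lie on distinct sides of $\partial D$ (the left, bottom, top, and sloping sides) and each moves strictly monotonically with $\mu$. Since $P(\mu)$ is inscribed in $D$, a point $(x,z)\in\partial D$ can belong to $P(\mu)$ only when it coincides with a vertex of $P(\mu)$, and this happens for at most one $\mu\in[\lambda_*,\lambda_{**}]$. Hence $\chi(x,z,\cdot)=0$ almost everywhere on $[\lambda_*,\lambda_{**}]$ for $(x,z)\in\partial D$, so $\Upsilon\equiv0$ pointwise on $\partial D$. Combined with the continuity obtained in step (ii), this gives zero $H^1$-trace, so $\Upsilon\in H^1_0(D)$.

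The main obstacle is verifying the uniform lower bound on the Jacobian $|\partial z/\partial\mu|$ along each of the four sides. This relies crucially on the strict inclusion $[\lambda_*,\lambda_{**}]\subset(\tfrac12,\tfrac45)$, which keeps $a\in(1,2)$ bounded away from the endpoints and prevents the parallelogram from degenerating; a minor side issue is that the sweep regions $R_\bullet$ for different sides may overlap, but this is harmless because each $h_\bullet$ is bounded and at most four contribute at any point.
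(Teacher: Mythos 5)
Your proof is correct, but it takes a genuinely different route from the paper's. The paper inverts the side equations $z=z_i(x,\lambda)$ to obtain explicit functions $\lambda_i(x,z)$ (formulas (\ref{1s.10})--(\ref{1s.16})) and writes $\Upsilon$ in closed form on each of the thirteen subdomains cut out by $S(\lambda_*)$ and $S(\lambda_{**})$: for fixed $(x,z)$ the set $\{\mu:(x,z)\in P(\mu)\}$ is an interval whose endpoints are among $\lambda_*,\lambda_{**},\lambda_i(x,z)$, so $\Upsilon=G(\cdot)-G(\cdot)$ with $G$ the primitive of $\sigma$ (Figure \ref{fig:ma2}); continuity, piecewise smoothness and vanishing on $\partial D$ are then read off. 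You instead compute the weak derivatives distributionally, converting $\int_D\Upsilon\,\varphi_x$ via Fubini and Green's theorem into boundary integrals over $\partial P(\mu)$ and then into area integrals by the change of variables $(\mu,x)\mapsto(x,z)$ --- which is precisely the inverse of the paper's $\lambda_i$ --- concluding $\Upsilon\in W^{1,\infty}(D)$. Both arguments rest on the same geometric fact (strict monotonicity of $z_i(x,\cdot)$ in $\mu$ along each side, uniformly for $\mu\in[\lambda_*,\lambda_{**}]$), but yours trades the explicit piecewise formula for a Jacobian lower bound that must be checked side by side; be aware that on the side $AB$ the bound $\partial z/\partial a=(x+1)/a^{2}-1\le(2-2a)/a<0$ holds only because $x+1\le a(2-a)$ there, so the verification genuinely uses the endpoints of the segment and not merely the supporting line. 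Your boundary argument (the vertices of $P(\mu)$ are its only points on $\partial D$, since no side of $P(\mu)$ is parallel to a side of $\partial D$, and each vertex moves strictly monotonically with $\mu$) is sound and is a nice substitute for the paper's inspection of the piecewise formula. What your route does not supply is the explicit representation of $\Upsilon$ through $G(\lambda_i(x,z))$, which the paper reuses later in Proposition \ref{teo:2.5r} and Theorem \ref{teo:3.3}.
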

\begin{proof}
Define the functions $\lambda_i(x,y)$, $i=1,\,2,\,3,\,4$, by the formulas:
 \begin{equation}\label{1s.10}
\lambda_1(x,z):=\frac {(z-1)^2+2x^2+6x+4+(1-z)\sqrt{(z-1)^2+4(x+1)}}{2((z-1)^2+(x+2)^2 )},
 \end{equation}
 \begin{equation}\label{1s.12}
\lambda_2(x,z):=\frac {(z-1)^2+2x^2+2x+(1-z)\sqrt{(z-1)^2-4(x+1)}}{2((z-1)^2+x^2 )},
 \end{equation}
\begin{equation}\label{1s.14}
\lambda_3(x,z):=\frac {(z+3)^2+2x^2+2x+(z+3)\sqrt{(z+3)^2-4(x+1)}}{2((z+3)^2+x^2 )},
 \end{equation}
\begin{equation}\label{1s.16}
\lambda_4(x,z):=\frac {(z+1)^2+2x^2+6x+4+(1+z)\sqrt{(z+1)^2+4(x+1)}}{2((z+1)^2+(x+2)^2 )}.
 \end{equation}
One can verify that $z_i(x,\lambda_i(x,z))\equiv z$, $i=1,\,2,\,3,\,4$.

Under the condition (\ref{12v45}) the parallelograms $S(\lambda_*)$, $S(\lambda_{**})$
decompose  $D$ into subdomains $D_i$, $i=1,2,...\,13$ (see Figure \ref{fig:ma1}b)).
 If
\[
G(\lambda):=\int_{\frac 12}^{\lambda} \sigma(\mu)\,d\mu
\]
is the primitive function, then
$\Upsilon(x,z;\sigma;\lambda_*,\lambda_{**})$ coincides with the function shown at
\textbf{Figure \ref{fig:ma2}}.
\begin{figure}[ht]
\begin{center}
\includegraphics[height=6.5cm]{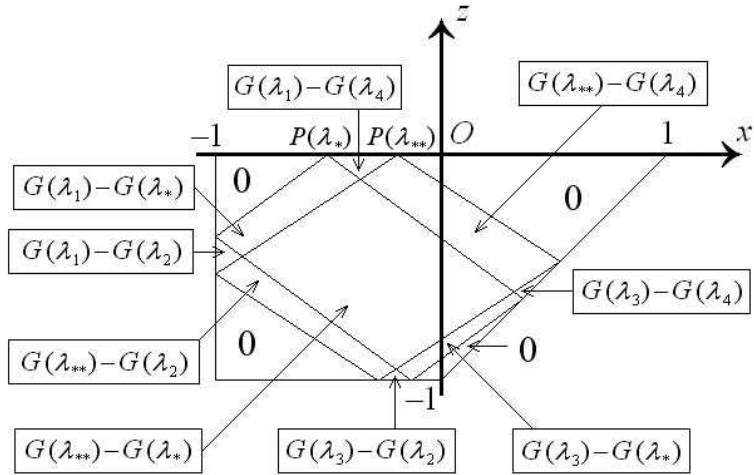}
\caption{The piecewise presentation of the function $\Upsilon(x,z;\sigma;\lambda_*,\lambda_{**})$.}
\label{fig:ma2}
\end{center}
\end{figure}
It is easy to proof that under the condition (\ref{12v45})
 $\Upsilon(x,z;\sigma;\lambda_*,\lambda_{**})$ is continuous and piecewise smooth
in the closure $\overline{D}$. And it is obvious that $\Upsilon(x,z;\sigma;\lambda_*,\lambda_{**})$
vanishes on the boundary $\partial D$.
\end{proof}

Unless otherwise specified, below we assume  (\ref{12v45}) to be fulfilled.

\begin{theorem}\label{teo:1.2}
Suppose that in (\ref{0gl.1p.16}--\ref{05gl.1p.20})
\[
{\psi}_0(x,z):=
\int\limits_{\lambda_{*}}^{\lambda_{**}}
\sigma_0(\lambda)\,\chi(x,z;\lambda)\,d\lambda,\quad
{\psi}_1(x,z):=
\int\limits_{\lambda_{*}}^{\lambda_{**}}
\sigma_1(\lambda)\,\chi(x,z;\lambda)\,d\lambda,
\]
where the functions $\sigma_i(\lambda)\in C^1\left[\frac 12,\frac 45\right]$, $i=1,2$.
Then the function
\[
{\psi}(x,z;t):=\!\!
\int\limits_{\lambda_{*}}^{\lambda_{**}}  \! \cos(\sqrt{\lambda}\; t)
\,\sigma_0(\lambda)\,\chi(x,z;\lambda)\,d\lambda
\]
\begin{equation}\label{typesol1}
+\int\limits_{\lambda_{*}}^{\lambda_{**}}   \frac{\sin(\sqrt{\lambda}\;t)}{\sqrt{\lambda}}\,
\sigma_1(\lambda)\,\chi(x,z;\lambda)\,d\lambda,
\end{equation}
is a solution to (\ref{0gl.1p.16}--\ref{05gl.1p.20}).
\end{theorem}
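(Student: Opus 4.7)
The strategy is to verify in turn that (a) $\psi(\cdot;t),\psi_t(\cdot;t),\psi_{tt}(\cdot;t)\in H_0^1(D)$ for every $t\geq 0$, (b) the initial conditions in (\ref{05gl.1p.20}) hold, and (c) the weak identity (\ref{05gl.1p.24}) is satisfied for every $\varphi\in C_c^\infty(D)$. Introduce
\[
\widetilde{\sigma}(\lambda;t):=\cos(\sqrt{\lambda}\,t)\,\sigma_0(\lambda)+\frac{\sin(\sqrt{\lambda}\,t)}{\sqrt{\lambda}}\,\sigma_1(\lambda),
\]
so that $\psi(\cdot;t)=\Upsilon(\cdot;\widetilde{\sigma}(\cdot;t);\lambda_*,\lambda_{**})$. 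Since $\sigma_0,\sigma_1\in C^1[\tfrac12,\tfrac45]$ and the trigonometric factors are smooth in $\lambda\in[\tfrac12,\tfrac45]$, each of $\widetilde{\sigma}$, $\widetilde{\sigma}_t$, $\widetilde{\sigma}_{tt}=-\lambda\widetilde{\sigma}$ belongs to $C^1[\tfrac12,\tfrac45]$. Theorem \ref{teo:1.1} places each of the corresponding $\Upsilon$-integrals in $H_0^1(D)$, and the continuity of the construction in Theorem \ref{teo:1.1} with respect to the $C^1$-topology of the weight (together with linearity in the weight) identifies these integrals with the genuine $H_0^1$-derivatives $\psi_t$ and $\psi_{tt}$. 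Item (b) is immediate at $t=0$, since $\cos 0=1$ and $\sin 0/\sqrt{\lambda}=0$.

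For (c), fix $\varphi\in C_c^\infty(D)$. Since $\varphi$ is compactly supported and $\psi,\psi_{tt}\in H_0^1(D)$, integration by parts rewrites the left-hand side of (\ref{05gl.1p.24}) as $-\int_D \psi_{tt}\,\Delta\varphi\,dxdz-\int_D \psi\,\varphi_{zz}\,dxdz$. Substituting the explicit formulas for $\psi$ and $\psi_{tt}$, interchanging the order of integration by Fubini (all integrands are bounded, and $\varphi_{xx},\varphi_{zz}$ are compactly supported), and using $\widetilde{\sigma}_{tt}=-\lambda\widetilde{\sigma}$, the expression becomes
\[
\int_{\lambda_*}^{\lambda_{**}} \widetilde{\sigma}(\lambda;t)\int_{P(\lambda)}\bigl[\lambda\Delta\varphi-\varphi_{zz}\bigr]\,dxdz\,d\lambda.
\]
The whole theorem thus reduces to the pointwise identity
\[
\int_{P(\lambda)}\bigl[\lambda\Delta\varphi-\varphi_{zz}\bigr]\,dxdz=0\qquad\text{for each }\lambda\in(\tfrac12,\tfrac45).
\]

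This is the geometric core of the argument. By the divergence theorem it equals $\int_{\partial P(\lambda)}[\lambda\varphi_x n_1-(1-\lambda)\varphi_z n_3]\,dS$. Passing to characteristic coordinates $\xi=x+az$, $\eta=x-az$ with $a=\sqrt{\lambda/(1-\lambda)}$, and using $a(1-\lambda)=\sqrt{\lambda(1-\lambda)}$, a short computation with $\varphi_x=\varphi_\xi+\varphi_\eta$, $\varphi_z=a(\varphi_\xi-\varphi_\eta)$ shows that the boundary integrand equals a constant multiple of $\varphi_\eta$ on sides of $P(\lambda)$ where $\xi$ is constant, and a constant multiple of $\varphi_\xi$ on sides where $\eta$ is constant; in each case it is, up to a factor, the pure tangential derivative of $\varphi$ along the side. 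Hence each side contribution equals a multiple of the difference of $\varphi$ at the two vertices bounding that side. Because $P(\lambda)$ is inscribed in $D$ every vertex lies on $\partial D$, and because $\varphi$ is compactly supported in $D$ it vanishes in a neighbourhood of every vertex; so all four side-contributions are zero. I expect the main technical obstacle to be precisely this reduction to a pure tangential derivative: it relies essentially on the sides of $P(\lambda)$ being characteristics of the string equation (\ref{10}), which is the point at which the parallelogram construction enters indispensably.
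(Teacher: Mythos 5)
Your proposal is correct and follows essentially the same route as the paper: reduce the weak identity via $\widetilde{\sigma}_{tt}=-\lambda\widetilde{\sigma}$, integration by parts, and Fubini to the vanishing of $\int_{P(\lambda)}[\lambda\varphi_{xx}-(1-\lambda)\varphi_{zz}]\,dxdz$, then apply Green's formula and observe that on each characteristic side the boundary integrand is a pure tangential derivative, so each side contribution telescopes to values of $\varphi$ at the inscribed vertices on $\partial D$, where $\varphi$ vanishes. The only (harmless) differences are that you treat $\sigma_0$ and $\sigma_1$ simultaneously and use characteristic coordinates $\xi=x+az$, $\eta=x-az$ where the paper parametrizes each side directly via $dz=a^{-1}dx$, and you are somewhat more explicit about the $H_0^1$-regularity of $\psi_t$ and $\psi_{tt}$.
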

\begin{proof}
For simplicity assume $\sigma_1\equiv 0$.
For any smooth function $\varphi$ which is compactly supported in $D$,
we have
\[
\int\limits_D\!\!(\psi_{ttx}\varphi_x+\psi_{ttz}\varphi_z+\psi_z\varphi_z) dxdz=\!\!\!
\int\limits_D\!\!\left[\varphi_x\left(\int\limits_{\lambda_*}^{\lambda_{**}}
\!\!\cos(\sqrt{\lambda}\; t)
\,\sigma_0(\lambda)\,\chi(x,z;\lambda)\,d\lambda\right)_{\!\!\!ttx}\right.
\]
\[
+ \varphi_z\left(\int\limits_{\lambda_*}^{\lambda_{**}}
\cos(\sqrt{\lambda}\; t)
\,\sigma_0(\lambda)\,\chi(x,z;\lambda)\,d\lambda\right)_{\!\!\!ttz}
\]
\[
+ \left.\varphi_z\left(\int\limits_{\lambda_*}^{\lambda_{**}}
\cos(\sqrt{\lambda}\; t)
\,\sigma_0(\lambda)\,\chi(x,z;\lambda)\,d\lambda\right)_{\!\!\!z}\right]dxdz=
\]
\[
=-\int\limits_D\left[\varphi_x\left(\int\limits_{\lambda_*}^{\lambda_{**}}
\cos(\sqrt{\lambda}\; t)\lambda
\,\sigma_0(\lambda)\,\chi(x,z;\lambda)\,d\lambda\right)_{\!\!\!x}\right.
\]
\[
+ \varphi_z\left(\int\limits_{\lambda_*}^{\lambda_{**}}
\cos(\sqrt{\lambda}\; t)\lambda
\,\sigma_0(\lambda)\,\chi(x,z;\lambda)\,d\lambda\right)_{\!\!\!z}
\]
\[
- \left.\varphi_z\left(\int\limits_{\lambda_*}^{\lambda_{**}}
\cos(\sqrt{\lambda}\; t)
\,\sigma_0(\lambda)\,\chi(x,z;\lambda)\,d\lambda\right)_{\!\!\!z}\right]dxdz=
\]
\[
=\int\limits_D\left[\varphi_{xx}\int\limits_{\lambda_*}^{\lambda_{**}}
\cos(\sqrt{\lambda}\; t)\lambda
\,\sigma_0(\lambda)\,\chi(x,z;\lambda)\,d\lambda\right.
\]
\[
+\varphi_{zz}\int\limits_{\lambda_*}^{\lambda_{**}}
\cos(\sqrt{\lambda}\; t)\lambda
\,\sigma_0(\lambda)\,\chi(x,z;\lambda)\,d\lambda
\]
\[
-
\left.\varphi_{zz}\int\limits_{\lambda_*}^{\lambda_{**}}
\cos(\sqrt{\lambda}\; t)
\,\sigma_0(\lambda)\,\chi(x,z;\lambda)\,d\lambda \right]dxdz=
\]
\[
=\int\limits_D\left(\int\limits_{\lambda_*}^{\lambda_{**}}\left[\lambda\varphi_{xx}+
\lambda\varphi_{zz}-\varphi_{zz}\right]
\cos(\sqrt{\lambda}\; t)
\,\sigma_0(\lambda)\,\chi(x,z;\lambda)\,d\lambda\right) dxdz=
\]
\[
=\int\limits_{\lambda_*}^{\lambda_{**}}\left(\int\limits_D\left[\lambda\varphi_{xx}-
(1-\lambda)\varphi_{zz}\right]
\cos(\sqrt{\lambda}\; t)
\,\sigma_0(\lambda)\,\chi(x,z;\lambda)\,dxdz\right) d\lambda=
\]
\[
=\int\limits_{\lambda_*}^{\lambda_{**}}\left(\int\limits_{P(\lambda)}\left[\lambda\varphi_{xx}-
(1-\lambda)\varphi_{zz}\right]
\cos(\sqrt{\lambda}\; t)
\,\sigma_0(\lambda)\,dxdz\right) d\lambda.
\]
Consider the inner integral. Fix $\lambda$ and apply Green's formula:
\[
\int\limits_{P(\lambda)}\left[\lambda\varphi_{xx}-
(1-\lambda)\varphi_{zz}\right]
\cos(\sqrt{\lambda}\; t)
\,\sigma_0(\lambda)\,dxdz=
\]
\[
=-\int\limits_{S(\lambda)}\cos(\sqrt{\lambda}\; t)
\,\sigma_0(\lambda)
\left[\lambda\varphi_{x}\cos \gamma_1 -
(1-\lambda)\varphi_{z}\cos \gamma_2 \right]
\,dS(\lambda),
\]
where $S(\lambda)$ is oriented counterclockwise. We have (see Fig.~\ref{fig:ma1} a):
\[
\int\limits_{S(\lambda)}\cos(\sqrt{\lambda}\; t)
\,\sigma_0(\lambda)
\left[\lambda\varphi_{x}dz +
(1-\lambda)\varphi_{z}dx \right]=
\]
\[
=\int\limits_{CA}\cos(\sqrt{\lambda}\; t)
\,\sigma_0(\lambda)\left[
\lambda\varphi_{x}dz +
(1-\lambda)\varphi_{z}dx \right]+\int\limits_{AB}...+\int\limits_{BD}...+\int\limits_{DC}...
\]
Consider the integral over the segment $CA$. Denote by $\widetilde{\varphi}:=\varphi|_{CA}$.
As $dz=a^{-1}dx$ on the segment $CA$, it is easy to conclude that
\[
\int\limits_{CA}\cos(\sqrt{\lambda}\; t)
\,\sigma_0(\lambda)\left[\lambda\varphi_{x}dz +
(1-\lambda)\varphi_{z}dx\right]=
\]
\[
=\int\limits_{CA}\lambda\cos(\sqrt{\lambda}\; t)
\,\sigma_0(\lambda)\left[\varphi_{x}dz +\frac 1{a^2}\,\varphi_{z}dx\right]=
\]
\[
=\int\limits_{C}^{A}\lambda\cos(\sqrt{\lambda}\; t)
\,\sigma_0(\lambda)\left[\varphi_{x}\frac 1{a}\,dx +\frac 1{a^2}\,\varphi_{z}\,dx\right]=
\frac{\lambda}{a}\cos(\sqrt{\lambda}\; t)
\,\sigma_0(\lambda)\int\limits_{C}^{A}\frac {d{\widetilde{\varphi}}}{dx}dx=
\]
\[
=\frac{\lambda}{a}\cos(\sqrt{\lambda}\; t)
\,\sigma_0(\lambda)\left[\widetilde{\varphi}|_{A}-\widetilde{\varphi}|_{C}\right]=0.
\]
The other parts of the boundary $S(\lambda)$ can be considered similarly.
So we obtain
\[
\int\limits_D(\psi_{ttx}\varphi_x+\psi_{ttz}\varphi_z+\psi_z\varphi_z) dxdz=0.
\]
\end{proof}
\begin{corollary}\label{teo:1.3}
 Consider the primitive function
\begin{equation}\label{prim1}
F(\lambda,t):=\int\limits_{\lambda_*}^{\lambda}\left( \sigma_0(\mu)\cos (\sqrt{\mu}\,t)
+\sigma_1(\mu)    \frac{\sin(\sqrt{\mu}\;t)}{\sqrt{\mu}}\right) \,d\mu.
\end{equation}
Then the function $\psi(x,z,t;\lambda_*,\lambda_{**})$ defined at
\textbf{Figure \ref{fig:ma2s}}
is a solution to (\ref{0gl.1p.16}--\ref{0gl.1p.18})
in $D$.
\end{corollary}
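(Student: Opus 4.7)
The plan is to deduce the corollary directly from Theorem \ref{teo:1.2} by rewriting the integral representation (\ref{typesol1}) in piecewise form. Theorem \ref{teo:1.2} already guarantees that
\[
\psi(x,z;t) = \int_{\lambda_*}^{\lambda_{**}}\!\!\left[\sigma_0(\lambda)\cos(\sqrt{\lambda}\,t) + \sigma_1(\lambda)\frac{\sin(\sqrt{\lambda}\,t)}{\sqrt{\lambda}}\right]\chi(x,z;\lambda)\,d\lambda
\]
is a generalized solution to (\ref{0gl.1p.16}--\ref{05gl.1p.20}). Thus it suffices to show that on each subdomain $D_i$ ($i=1,\dots,13$) from Figure \ref{fig:ma1}b, this integral coincides with the expression appearing in Figure \ref{fig:ma2s}.

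First I would fix $(x,z)\in D_i$ and describe the set $\Lambda(x,z):=\{\lambda\in[\lambda_*,\lambda_{**}]:(x,z)\in P(\lambda)\}$. By the geometric construction of $P(\lambda)$ and the identities $z_i(x,\lambda_i(x,z))\equiv z$ established in the proof of Theorem \ref{teo:1.1}, the point $(x,z)$ lies on the side of $P(\lambda)$ labelled $i$ precisely when $\lambda=\lambda_i(x,z)$. Consequently $\Lambda(x,z)$ is either empty, the full interval $[\lambda_*,\lambda_{**}]$, or a subinterval whose endpoints are chosen among $\lambda_*,\lambda_{**}$ and the values $\lambda_j(x,z)$ relevant to $D_i$. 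This is precisely the case analysis already implicit in the proof of Theorem \ref{teo:1.1}, where the piecewise representation of $\Upsilon$ is obtained from the primitive $G(\lambda)$.

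Next, on each $D_i$ the integrand in (\ref{typesol1}) is $\chi(x,z;\lambda)$ times a function of $\lambda$ alone (for fixed $t$), namely the integrand in the definition of $F(\lambda,t)$. Therefore
\[
\psi(x,z;t)=\int_{\Lambda(x,z)}\!\left[\sigma_0(\lambda)\cos(\sqrt{\lambda}\,t)+\sigma_1(\lambda)\frac{\sin(\sqrt{\lambda}\,t)}{\sqrt{\lambda}}\right]d\lambda = F(\beta(x,z),t)-F(\alpha(x,z),t),
\]
where $\alpha,\beta$ are the endpoints of $\Lambda(x,z)$ described in the previous step. This is exactly the expression displayed in Figure \ref{fig:ma2s} (which is obtained from Figure \ref{fig:ma2} by replacing $G(\lambda)$ with $F(\lambda,t)$).

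The only nontrivial work is the case analysis over the thirteen subdomains $D_i$, together with checking that the endpoints of $\Lambda(x,z)$ are continuous across the separating curves $S(\lambda_*)\cup S(\lambda_{**})$ so that the piecewise formula matches on shared boundaries; I expect this to be the main (though purely bookkeeping) obstacle. Since the piecewise formula coincides with the solution of Theorem \ref{teo:1.2}, it automatically satisfies (\ref{0gl.1p.16}--\ref{0gl.1p.18}).
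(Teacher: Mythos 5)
Your proposal is correct and follows exactly the route the paper intends: the corollary is an immediate consequence of Theorem \ref{teo:1.2}, since the piecewise expression in Figure \ref{fig:ma2s} is just the explicit evaluation of the integral (\ref{typesol1}) over the set of $\lambda$ for which $(x,z)\in P(\lambda)$, obtained by the same case analysis (via the functions $\lambda_i(x,z)$) that produced Figure \ref{fig:ma2} in the proof of Theorem \ref{teo:1.1}, with the primitive $G(\lambda)$ replaced by $F(\lambda,t)$. The paper gives no separate proof precisely because this substitution is the whole argument.
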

\begin{figure}[ht]
\begin{center}
\includegraphics[height=6.5cm]{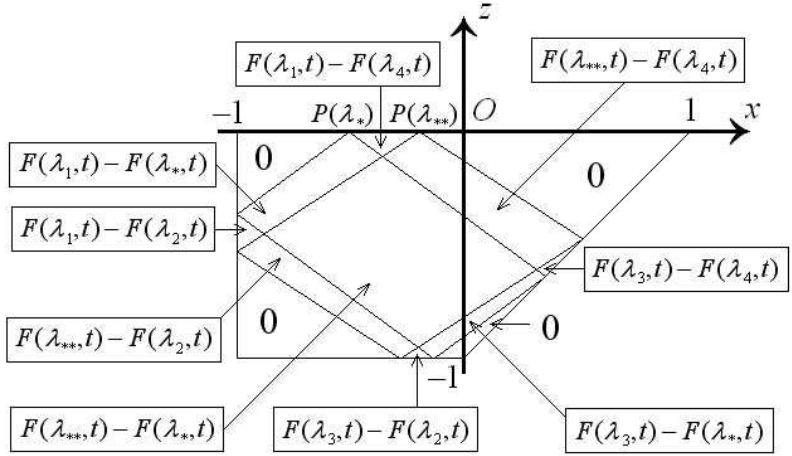}
\caption{The piecewise presentation of $\psi(x,z,t;\lambda_*,\lambda_{**})$.}
\label{fig:ma2s}
\end{center}
\end{figure}
\begin{example}\label{teo:1.3.1}
\rm
The function $\psi(x,z,t;\lambda_*,\lambda_{**})$ presented at
\textbf{Figure \ref{fig:ma3s}}
is a solution to the problem (\ref{0gl.1p.16}--\ref{05gl.1p.20})
in $D$.
\begin{figure}[ht]
\begin{center}
\includegraphics[height=11.5cm]{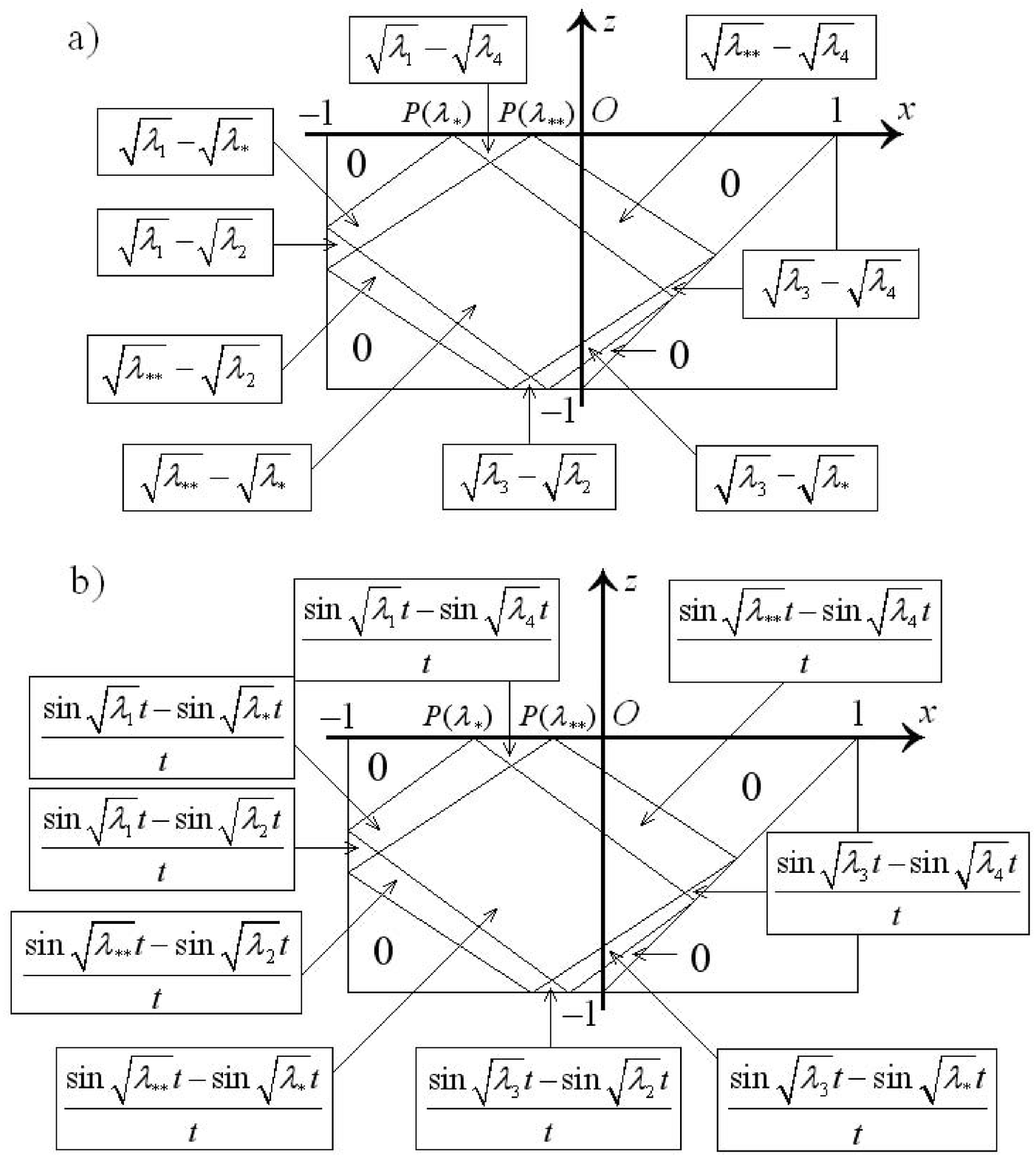}
\caption{a) The function $\psi_0(x,z)$ belongs to $H_0^1(D)$.
b) The solution $\psi(x,z,t;\lambda_*,\lambda_{**})$ of (\ref{0gl.1p.16}--\ref{05gl.1p.20})
corresponding to $\psi_0(x,z)$  and $\psi_1\equiv 0$.
}
\label{fig:ma3s}
\end{center}
\end{figure}
This follows immediately from Theorem \ref{teo:1.2} and Corollary
\ref{teo:1.3} with
 $\sigma_0=\frac 1{2\sqrt{\lambda}}$ and   $\sigma_1\equiv 0$.
\end{example}

It is clear that varying   $\lambda_*$, $\lambda_{**}$, and $\sigma_i$,
we get inertial waves with very
different properties. Choosing appropriate functions
$\sigma_i$ it is possible to construct arbitrarily smooth solutions of this type.

\begin{example}\label{teo:1.3.2}
\rm If
\[
F(\lambda, t)=\frac {\sin(\sqrt\lambda (c-t)-d)}{2(c-t)}+
\frac {\sin(\sqrt\lambda (c+t)-d)}{2(c+t)}+
\frac {\sin\sqrt\lambda\, t}{t} \quad \mbox{and}\,\,
\]
\[
F_1(\lambda)=\frac {\sin(c\sqrt\lambda -d)}{c}+\sqrt\lambda,
\]
where
\[
c=\frac {2\pi}{\sqrt\lambda_{**}-\sqrt\lambda_*},\quad
d=\frac {\pi(\sqrt\lambda_{**}+\sqrt\lambda_*) }{\sqrt\lambda_{**}-\sqrt\lambda_*},
\]
then the function $\psi(x,z,t;\lambda_*,\lambda_{**})$
defined  at Figure \ref{fig:ma2s}, is a classical solution to the problem
(\ref{0gl.1p.16} --- \ref{05gl.1p.20}) in  $D$,
where $\psi_0$ is given at \textbf{Figure  \ref{fig:ma5}}, and $\psi_1\equiv 0$.
\begin{figure}[ht]
\begin{center}
\includegraphics[height=6cm]{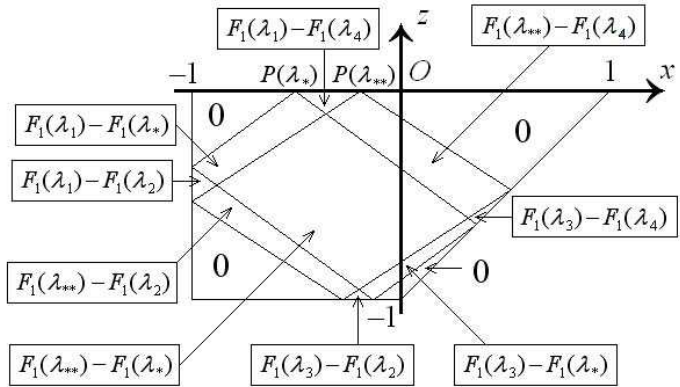}
\caption{The initial state of the classical solution $\psi(x,z,t;\lambda_*,\lambda_{**})$. }
\label{fig:ma5}
\end{center}
\end{figure}
One can prove this by a direct calculation, but it follows from the fact that
in this case $\sigma_1\equiv 0$ and
the support of $\sigma_0$ lies within the interval $[\lambda_*,\lambda_{**}]$:
\begin{equation} \label{13ab}
\sigma_0=\left\{
\begin{array}{ll}
0,       &\mbox{ for } \lambda\leq \lambda_*,\\
\frac 1{2\sqrt{\lambda}}\left(\cos\left(\frac {2\pi\sqrt{\lambda}}
{\sqrt{\lambda_{**}}-\sqrt{\lambda_*}}-
\frac {\pi(\sqrt{\lambda_{**}}+\sqrt{\lambda_*} )}
{\sqrt{\lambda_{**}}-\sqrt{\lambda_*}}\right)+1\right),
&\mbox{ for }  \lambda_*<\lambda\leq\lambda_{**}, \\
0,&\mbox{ for }  \lambda_{**}<\lambda . \\
\end{array}
\right.
 \end{equation}
\end{example}

 Now using the obtained stream function $\psi$
  we can reconstruct the corresponding
 class of inertial waves $ \mathbf{U}$
 (the solutions to (\ref{0gl.1p.4}--\ref{init1})).
This procedure is commonly known.
The components $u$ and $w$ can be found directly from the equations
that determine the stream function:
\[
u=-\frac {\partial {\psi}}{\partial z},\quad
w=\frac {\partial {\psi}}{\partial x}.
\]
According to the second equation in (\ref{0gl.1p.4})
for the component $v$ we have
\begin{equation} \label{2002f1}
v=\int_0^t \frac {\partial \psi}{\partial z}(x,z,\tau;\lambda_*,\lambda_{**})\,d\tau
+v_0(x,z),
 \end{equation}
where $v_0(x,z)$ is a solution of the equation
\begin{equation} \label{2002f2}
\frac {\partial v_0}{\partial z}=-\Delta \psi_1(x,z).
 \end{equation}
It is clear that (\ref{2002f2})  determines $v_0(x,z)$
 only up to an arbitrary function $v_*(x)$
(of course, we are interested in the case $v \in L_2(D)$).
The function $p(x,z,t)$ can be found from the system of equations
\begin{equation} \label{2002f3}
\frac {\partial p}{\partial x}=
\int_{0}^{t}\frac{\partial {\psi}}{\partial z}(x,z;s)\,ds
+\frac {\partial }{\partial t}\frac{\partial {\psi}}{\partial z}(x,z;t)+v_0(x,z),
 \end{equation}
\begin{equation} \label{2002f4}
\frac {\partial p}{\partial z}=
-\frac {\partial }{\partial t}\frac{\partial {\psi}}{\partial x}(x,z;t).
 \end{equation}
 Respectively  $p(x,z,t)$ is determined up to the function $p_*(x)=\int v_*(x)dx$.
The functions
\[
\mathbf{U}_*=(0,v_*(x),0),\quad p_*(x)=\int v_*(x)dx
\]
are the stationary solutions of (\ref{0gl.1p.4}--\ref{init1}) (see \cite{Fokin2002}).

It is obviously that the conventional characteristics
(frequency, amplitude, wave vector dispersion relation,
the direction of propagation of energy, and so on)
can not be applied to the new found  class of inertial waves.

In the next sections in order to emphasize
 main features of the new type of inertial waves
we consider the case $\psi_1\equiv 0$ and $v_0\equiv 0 $.
Thus these inertial waves are
$\mathbf{U}= \mathbf{U}(x,z,t;\lambda_*,\lambda_{**})=(u,\,v,\,w)$, where
\begin{equation}\label{1s.50}
u=-\frac {\partial {\psi}}{\partial z},\quad
v=\int\limits_{0}^{t}\frac{\partial {\psi}}{\partial z}(x,z;s)\,ds,\quad
w=\frac {\partial {\psi}}{\partial x},\quad u,\,v,\,w \in L_2(D),
\end{equation}
and $\psi=\psi(x,z,t;\lambda_*,\lambda_{**})$ corresponds to the case
$\sigma_1\equiv 0$.

\section{Geometric properties of solutions
$\psi(x,z,t;\lambda_*,\lambda_{**})$ and $\mathbf{U}(x,z,t;\lambda_*,\lambda_{**})$}

\begin{theorem}\label{teo:2.1}
$\mathbf{U}(x,z,t;\lambda_*,\lambda_{**})\equiv
\mathbf{0}$ in the domains $D_i$, $i=9 - 13$ (see Figure \ref{fig:ma1}b)).
\end{theorem}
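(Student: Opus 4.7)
The plan is to reduce the assertion to the single fact that the stream function $\psi(x,z,t;\lambda_*,\lambda_{**})$ vanishes identically on each of the five subdomains $D_9,\dots,D_{13}$. Indeed, under the running assumption $\psi_1\equiv 0$ and $v_0\equiv 0$ stated just before Theorem \ref{teo:2.1}, formulas (\ref{1s.50}) give the three components of $\mathbf{U}$ as $u=-\psi_z$, $w=\psi_x$, and $v=\int_0^t\psi_z(x,z,s)\,ds$. Consequently, if $\psi$ vanishes identically on an open set for every $t>0$, then so do all its spatial partial derivatives there, and hence also $u$, $v$, and $w$.

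Thus it remains to prove that $\psi(x,z,t;\lambda_*,\lambda_{**})=0$ for every $t>0$ and every $(x,z)\in D_9\cup\cdots\cup D_{13}$. By formula (\ref{typesol1}) with $\sigma_1\equiv 0$,
\[
\psi(x,z,t)=\int_{\lambda_*}^{\lambda_{**}}\cos(\sqrt{\lambda}\,t)\,\sigma_0(\lambda)\,\chi(x,z;\lambda)\,d\lambda,
\]
so a sufficient condition for $\psi$ to vanish at $(x,z)$ for every $t$ is that $\chi(x,z;\lambda)=0$ for every $\lambda\in[\lambda_*,\lambda_{**}]$, i.e.\ that $(x,z)$ lies outside every parallelogram $P(\lambda)$ with $\lambda$ in this range. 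Equivalently, this amounts to proving
\[
D_9\cup\cdots\cup D_{13}\ \subset\ D\setminus\bigcup_{\lambda\in[\lambda_*,\lambda_{**}]}P(\lambda).
\]

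To establish this geometric inclusion, I would use the inverse functions $\lambda_i(x,z)$ of (\ref{1s.10})--(\ref{1s.16}), which, by the identity $z_i(x,\lambda_i(x,z))\equiv z$ already noted in the proof of Theorem \ref{teo:1.1}, give the unique value of $\lambda$ for which the $i$-th side of $S(\lambda)$ passes through $(x,z)$. A point $(x,z)\in D$ belongs to some $P(\lambda)$ with $\lambda\in[\lambda_*,\lambda_{**}]$ if and only if at least one admissible $\lambda_i(x,z)$ lies in $[\lambda_*,\lambda_{**}]$ and the corresponding configuration of sides does realize the inclusion $(x,z)\in P(\lambda_i(x,z))$. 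The five subdomains $D_9,\dots,D_{13}$ of Figure \ref{fig:ma1}b are precisely the connected components of $D\setminus(S(\lambda_*)\cup S(\lambda_{**}))$ on which every admissible $\lambda_i(x,z)$ falls outside $[\lambda_*,\lambda_{**}]$.

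The main obstacle is this last, essentially geometric, bookkeeping: one has to go through the list of the five subdomains, identify for each of them which of the four side-families $\{z=z_i(x,\lambda)\}_{\lambda}$ can a priori meet it, and verify from the explicit formulas (\ref{1s.10})--(\ref{1s.16}) that in every case the value $\lambda_i(x,z)$ lies outside $[\lambda_*,\lambda_{**}]$. This is routine case analysis with no analytic difficulty, but it is the only nontrivial step in the argument; once it is done, the vanishing of $\psi$ on $D_9\cup\cdots\cup D_{13}$ and hence of $\mathbf{U}$ follows from the reduction described above.
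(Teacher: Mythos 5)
Your reduction of the theorem to a statement about the stream function is the right first move, and your treatment of $D_{10},\dots,D_{13}$ agrees with the paper's: on those four regions $(x,z)$ lies outside $P(\lambda)$ for every $\lambda\in[\lambda_*,\lambda_{**}]$, so $\chi(x,z;\lambda)\equiv 0$ there and $\psi\equiv 0$, whence $\mathbf{U}\equiv\mathbf{0}$.

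However, your argument breaks down on $D_9$, and the claimed inclusion
$D_9\cup\cdots\cup D_{13}\subset D\setminus\bigcup_{\lambda\in[\lambda_*,\lambda_{**}]}P(\lambda)$
is false. The region $D_9$ is the central component, lying \emph{inside} $P(\lambda)$ for every $\lambda\in[\lambda_*,\lambda_{**}]$, so there $\chi(x,z;\lambda)\equiv 1$ and
\[
\psi(x,z,t)=\int_{\lambda_*}^{\lambda_{**}}\cos\bigl(\sqrt{\lambda}\,t\bigr)\,\sigma_0(\lambda)\,d\lambda ,
\]
which is in general a nonzero function of $t$. Your sufficient condition ($\chi=0$ for all $\lambda$ in the range) therefore cannot be verified on $D_9$, and the ``geometric bookkeeping'' you defer to would, if carried out, refute rather than confirm the inclusion. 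The correct observation — and the one the paper makes — is that on $D_9$ the function $\psi$ is constant in the spatial variables, so $\psi_x\equiv\psi_z\equiv 0$; since by (\ref{1s.50}) all three components $u=-\psi_z$, $w=\psi_x$, $v=\int_0^t\psi_z\,ds$ involve only spatial derivatives of $\psi$, the velocity field still vanishes there. So the theorem is true, but for $D_9$ the mechanism is ``$\psi$ spatially constant,'' not ``$\psi\equiv 0$,'' and your proof as written does not cover that case.
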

\begin{proof}
In the domains $D_i$, $i=10 - 13$ the function  $\psi\equiv 0$ according
to the construction. In $D_9$ the function $\psi$ does not depend on spatial variables,
hence $\mathbf{U}(x,z,t;\lambda_*,\lambda_{**})\equiv
\mathbf{0}$ in it too.
\end{proof}
\begin{remark}\label{rk:2.2}
It is easy to conclude that if $ \lambda_* $
and $\lambda_{**} $  tend to their extreme values,
i.e., $\lambda_*\rightarrow \frac 12 $ and $\lambda_{**}\rightarrow \frac 45 $,
then the oscillations affect almost the entire region  $ D $
(see Fig.~\ref{fig:ma6} a)).
But if $ \lambda_* $ and $ \lambda_{**} $ are
close to each other  and tend to some value $\frac 12< \lambda_{***} <\frac 45$, then
the region, where the oscillations take place, ``tends
to form the parallelogram $S(\lambda_{***})$''.
The rest of the fluid is not exposed to these oscillations (see
\textbf{Figure \ref{fig:ma6}}b)).
\begin{figure}[ht]
\begin{center}
\includegraphics[height=4.0cm]{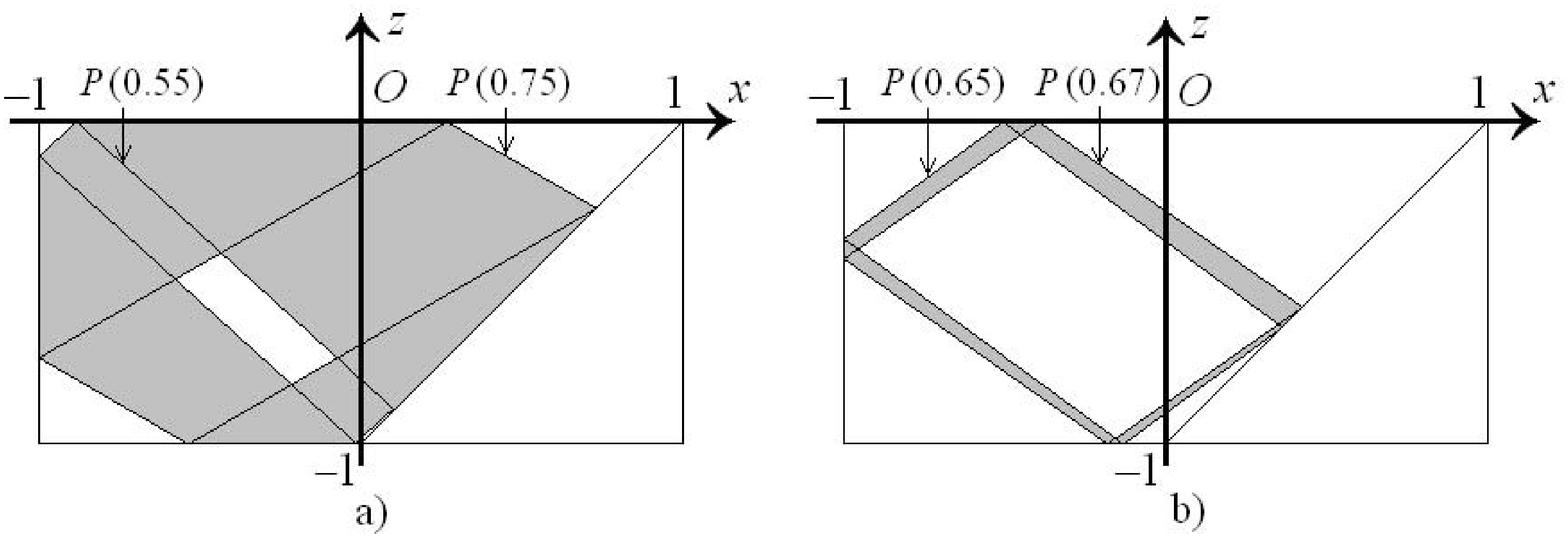}
\caption{
The comparison of areas affected by the fluid oscillations.
\newline
a) If $\lambda_*\rightarrow \frac 12 $ and $\lambda_{**}\rightarrow \frac 45 $
respectively, then the oscillations affect almost the entire domain  $ D $.
\newline
b) If $ \lambda_* $ and $ \lambda_{**} $ are
close to each other  and tend to each other, then
the region,  where the oscillations of the fluid take place, tends
to form a parallelogram (compare with Figure \ref{fig:ma2s}, where
the values $\lambda_* = 0.65$ and $ \lambda_{**} = 0.7$ are taken).
}
\label{fig:ma6}
\end{center}
\end{figure}
\end{remark}
\begin{remark}\label{rk:2.3}
The construction of the solutions $\psi(x,z,t;\lambda_*,\lambda_{**})$ and
$\mathbf{U}(x,z,t;\lambda_*,\lambda_{**})$ makes it obvious
that the oscillations of this type  exist not only in the trapezoid
$ D $, but in some rather wide class of domains.
Namely, if for some fixed values $ \lambda_{*} $ and $ \lambda_{**} $, we
arbitrarily vary the boundary  $\partial D $
in those its parts which are adjacent to areas where $ \psi(x, z, t) \equiv 0 $, then
the function $\mathbf{U}(x,z,t;\lambda_*,\lambda_{**})$ will be a solution to
(\ref{0gl.1p.4}--\ref{0gl.1p.10}) in the new domain too (see
\textbf{Figure \ref{fig:ma7}}).

This extension of the class of domains is important, because natural waters have
various configurations. The geometric properties described above mean that
when the oscillations of this type occur, the entire volume of the water
is divided into areas of very intense oscillations and areas of complete rest.
Probably, the latter phenomenon is the cause of the ``strange behavior'' of fish in the ocean,
when a shoal is swimming in a certain direction and then turns suddenly,
as if it was faced with the invisible flat wall: most likely,
a zone of intense oscillations begins behind this ``wall''.
\end{remark}
\begin{remark}\label{rk:2.4}
In Remark~\ref{rk:2.3} it is shown how
one can construct domains where the inertial waves of this type may occur, on the base of $D$.
Of course, the trapezoid  $D$ is not a distinguished domain: there are a lot of such domains.
A complete description of their configurations
is difficult and lays beyond the scope of this article.
\begin{figure}[ht]
\begin{center}
\includegraphics[height=3.5cm]{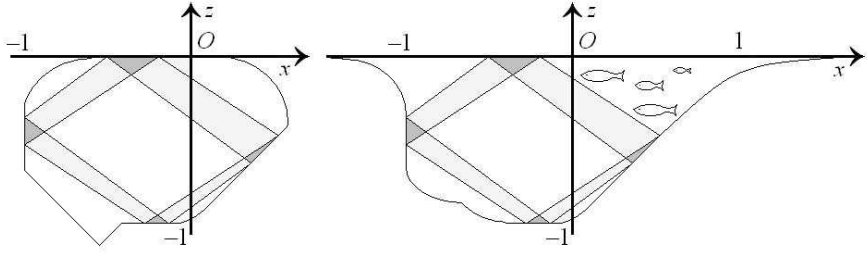}
\caption{Some other possible configurations $D$ where this type of inertial waves can occur.}
\label{fig:ma7}
\end{center}
\end{figure}
\end{remark}

\begin{theorem}\label{teo:2.5}
Suppose that  $\mu_*<\mu_{**}$,
$\mu_*,\mu_{**} \in (\frac 12,\frac 45)$,
 $(\mu_*,\mu_{**}) \cap (\lambda_*, \lambda_{**}) = \emptyset$
 and $\widetilde{\psi}(x,z,t;\mu_*,\mu_{**})$ is a solution
of the considered type (\ref{typesol1})  with some
functions $\widetilde{\sigma}_i(\lambda)\in C^1\left[\frac 12,\frac 45\right]$, $i=1,2$.
\footnote{In this theorem we do not need the restriction $\sigma_1 \equiv 0$.}
Then  the equality
\[
(\psi(x,z,t;\lambda_*,\lambda_{**}),
\widetilde{\psi}(x,z,t;\mu_*,\mu_{**}))_1=
\]
\begin{equation}
=\int\limits_D\left(
\nabla \psi(x,z,t;\lambda_*,\lambda_{**}),
\nabla \overline {\widetilde{\psi}(x,z,t;\mu_*,\mu_{**})}\right) dxdz=0
\end{equation}
is true for all $t>0$.
\footnote{This property is analogous to the orthogonality properties of the inertial modes
which correspond to distinct frequencies (see \cite{GreenspanBook68}, p. 52).}
\end{theorem}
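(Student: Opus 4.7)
The plan is to invoke the bounded self-adjoint operator $\mathbf{A}$ of (\ref{defA0}) and show that $\psi(\cdot,t)$ and $\widetilde{\psi}(\cdot,t)$ lie in spectral subspaces of $\mathbf{A}$ associated to the disjoint intervals $[\lambda_*,\lambda_{**}]$ and $[\mu_*,\mu_{**}]$; the orthogonality will then drop out of the functional calculus. As a first step, two integrations by parts in (\ref{defA0}) yield the identity $(\mathbf{A}h,g)_1 = \int_D h_z\overline{g}_z\,dx\,dz$ for all $h,g \in H^1_0(D)$. Consequently, the weak identity proved in Theorem \ref{teo:1.2} rewrites as $(\psi_{tt}+\mathbf{A}\psi,\varphi)_1 = 0$ for every smooth compactly supported $\varphi$, and by density for every $\varphi \in H^1_0(D)$. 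Hence $\psi_{tt} = -\mathbf{A}\psi$ in $H^1_0(D)$, and the same holds for $\widetilde{\psi}$.

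Next I plan to compute $\mathbf{A}\psi$ explicitly. Differentiating (\ref{typesol1}) twice under the integral sign, legitimate since $\sigma_i \in C^1[\tfrac12,\tfrac45]$, and using $\partial_t^2\cos(\sqrt{\lambda}\,t) = -\lambda\cos(\sqrt{\lambda}\,t)$ with its sine counterpart, one finds
\[
\psi_{tt}(x,z;t) = -\int_{\lambda_*}^{\lambda_{**}} \lambda\,f(\lambda,t)\,\chi(x,z;\lambda)\,d\lambda, \qquad f(\lambda,t) := \sigma_0(\lambda)\cos(\sqrt{\lambda}\,t) + \sigma_1(\lambda)\frac{\sin(\sqrt{\lambda}\,t)}{\sqrt{\lambda}}.
\]
Theorem \ref{teo:1.1}, applied with the $C^1$-weights $\lambda\sigma_i(\lambda)$ in place of $\sigma$, shows that the right-hand side lies in $H^1_0(D)$, so combined with the preceding step this gives $\mathbf{A}\psi(\cdot,t) = \int_{\lambda_*}^{\lambda_{**}} \lambda\,f(\lambda,t)\,\chi(\cdot;\lambda)\,d\lambda$. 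Iterating, $\mathbf{A}^n\psi = \int \lambda^n f(\lambda,t)\,\chi\,d\lambda$ for every $n\ge 0$, and therefore $p(\mathbf{A})\psi = \int p(\lambda)\,f(\lambda,t)\,\chi\,d\lambda$ for every real polynomial $p$. Since the spectrum of $\mathbf{A}$ is $[0,1]$, Stone--Weierstrass together with the operator-norm bound $\|p(\mathbf{A})\|\le \sup_{[0,1]}|p|$ extends this to every $\phi \in C([0,1])$, and bounded pointwise limits via the Borel functional calculus then extend it to every bounded Borel $\phi:[0,1]\to\mathbb{R}$. The analogous representation holds for $\widetilde{\psi}$ with its own weight $g(\mu,t)$.

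With these formulas in hand, the proof finishes cleanly. Take $\phi := \mathbf{1}_{[\lambda_*,\lambda_{**}]}$. Then $\phi(\mathbf{A})\psi(\cdot,t) = \psi(\cdot,t)$, while the hypothesis $(\lambda_*,\lambda_{**}) \cap (\mu_*,\mu_{**}) = \emptyset$ implies $\mathbf{1}_{[\lambda_*,\lambda_{**}]}(\mu) = 0$ for almost every $\mu \in [\mu_*,\mu_{**}]$, so $\phi(\mathbf{A})\widetilde{\psi}(\cdot,t) = \int_{\mu_*}^{\mu_{**}} \mathbf{1}_{[\lambda_*,\lambda_{**}]}(\mu)\,g(\mu,t)\,\chi(\cdot;\mu)\,d\mu = 0$. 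Because $\phi$ is real, $\phi(\mathbf{A})$ is self-adjoint on $(H^1_0(D),(\cdot,\cdot)_1)$, and therefore
\[
(\psi,\widetilde{\psi})_1 = (\phi(\mathbf{A})\psi,\widetilde{\psi})_1 = (\psi,\phi(\mathbf{A})\widetilde{\psi})_1 = 0 \qquad \text{for all } t>0.
\]

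The main obstacle is the second paragraph: one has to verify that term-by-term differentiation under the integral sign genuinely produces an $H^1_0(D)$-valued expression (handled by reapplying Theorem \ref{teo:1.1} with modified $C^1$-weights $\lambda^n\sigma_i$) and that this expression really equals $\mathbf{A}\psi$, not just some other element satisfying the same weak equation (handled by uniqueness: the first paragraph identifies $\mathbf{A}\psi$ pointwise in $t$ as $-\psi_{tt} \in H^1_0(D)$). Once that identification is secured, the induction on $n$, the passage to polynomials, and the continuous/Borel functional calculus extensions are all routine consequences of the self-adjointness of the bounded operator $\mathbf{A}$.
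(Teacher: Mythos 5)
Your proof is correct, and it rests on the same underlying idea as the paper's: both arguments show that $\psi(\cdot,t)$ and $\widetilde{\psi}(\cdot,t)$ lie in spectral subspaces of the bounded self-adjoint operator $\mathbf{A}$ attached to disjoint parameter intervals, and both reduce to the single computational fact (the Green's-formula identity inside the proof of Theorem \ref{teo:1.2}) that $\mathbf{A}$ acts on a wave packet $\int \rho(\lambda)\,\chi(\cdot;\lambda)\,d\lambda$ by multiplying the weight by $\lambda$. The implementations differ, though. The paper introduces the $H^1_0(D)$-valued function $\Psi(\lambda)$ of (\ref{13a}), verifies $\mathbf{A}\bigl(\Psi(\nu_2)-\Psi(\nu_1)\bigr)=\int_{\nu_1}^{\nu_2}\lambda\,d\Psi(\lambda)$ by redoing that Green's-formula computation, and then asserts --- rather tersely, with no further justification --- that this forces $\Psi(\lambda)=E_{\lambda}h$ for the resolution of the identity $E_\lambda$ of $\mathbf{A}$; orthogonality is then the disjointness of the spectral projections $E_{\lambda_{**}}-E_{\lambda_*}$ and $E_{\mu_{**}}-E_{\mu_*}$. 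You instead recover the multiplication property from the already-established evolution equation $\psi_{tt}=-\mathbf{A}\psi$ (a genuine economy: you reuse the conclusion of Theorem \ref{teo:1.2} rather than its proof), iterate it to get $p(\mathbf{A})\psi=\int p(\lambda)f(\lambda,t)\chi\,d\lambda$ for polynomials, and reach $\mathbf{1}_{[\lambda_*,\lambda_{**}]}(\mathbf{A})$ via Stone--Weierstrass and bounded pointwise limits. This is more explicit and sidesteps the paper's unproved identification $\Psi(\lambda)=E_\lambda h$. The price is that your limit passages (and your differentiation under the integral sign) need the quantitative estimate $\bigl\Vert \int\rho\,\chi\,d\lambda\bigr\Vert_1\le C\sup|\rho|$, whereas Theorem \ref{teo:1.1} as stated only gives membership in $H^1_0(D)$; that estimate does follow from the explicit piecewise formula exploited in Proposition \ref{teo:2.5r}, but you should say so rather than fold it into ``routine.''
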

\begin{proof}
On the interval $(\frac 12,\, \frac 45)$
for an arbitrary function $\sigma(\lambda)\in C^1[\frac 12,\,\frac 45]$,
define the function $\Psi(\lambda)$ with values in $H^1_0(D)$:
 \begin{equation} \label{13a}
\Psi(\lambda)= \Psi(x,z;\sigma;\lambda):=\left\{
\begin{array}{ll}
\bf{0},       &\mbox{ for } \lambda\leq \lambda_*,\\
\int\limits_{\lambda_*}^{\lambda}\sigma (\mu)\chi(x,z;\mu)d\mu,
&\mbox{ for }  \lambda_*<\lambda\leq\lambda_{**} ,\\
\int\limits_{\lambda_*}^{\lambda_{**}}\sigma (\mu)\chi(x,z;\mu)d\mu,
&\mbox{ for }  \lambda_{**}<\lambda , \\
\end{array}
\right.
 \end{equation}
where $\bf{0}$ stands for the zero element of $H^1_0(D)$.
Consider the function
\begin{equation}\label{11a}
\Xi:=\mathbf{A}\left(\Psi(\nu_2)-\Psi(\nu_1)\right)-\int\limits_
{\nu_1}^{\nu_2}\lambda\,d \Psi(\lambda),
\end{equation}
where the operator $\mathbf{A}$ is defined in Sec. 1, and
$ \nu_1$, $\nu_2 $ are arbitrary in $(\frac 12,\, \frac 45)$.
For $\lambda_*\leq \nu_1\leq\nu_2\leq \lambda_{**}$, we have:
\[
\Xi=\mathbf{A}\,\int\limits_{\nu_1}^{\nu_2}
\sigma(\lambda) \;\chi(x,z;\lambda) \,d\lambda -
\int\limits_{\nu_1}^{\nu_2}\lambda\;
\sigma(\lambda) \;\chi(x,z;\lambda) \,d\lambda .
\]
We prove now that $\Xi=\bf{0}$. This is equivalent to prove that
\[
\int\limits_{D}\left\{
\frac{\partial^2}{\partial z^2} \,\int\limits_{\nu_1}^{\nu_2}
\sigma(\lambda) \;\chi(x,z;\lambda) \,d\lambda  \right.
-\Delta\left.
\int\limits_{\nu_1}^{\nu_2}\lambda\,
\sigma(\lambda) \;\chi(x,z;\lambda) \,d\lambda  \right\}\cdot \varphi\,dxdz =0,
\]
where $\varphi$ is an arbitrary smooth function compactly supported in $D$ (the last statement means that
$\Delta\Xi$ is zero element in the space $H^{-1}(D)$,
see, e.g., \cite{KopachevskyKrein2001engV1}). We have:
\[
\int\limits_{D}\left\{
\frac{\partial^2}{\partial z^2} \,\int\limits_{\nu_1}^{\nu_2}
\sigma(\lambda) \;\chi(x,z;\lambda) \,d\lambda  \right.
-\Delta\left.
\int\limits_{\nu_1}^{\nu_2}\lambda\,
\sigma(\lambda) \;\chi(x,z;\lambda) \,d\lambda  \right\}\cdot \varphi\,dxdz =
\]
\[
=\int\limits_{D} \left\{
 \int\limits_{\nu_1}^{\nu_2}
\sigma(\lambda) \;\chi(x,z;\lambda) \,d\lambda \cdot
\frac{\partial^2 \varphi}{\partial z^2}
-\int\limits_{\nu_1}^{\nu_2}\lambda\;
\sigma(\lambda) \;\chi(x,z;\lambda) \,d\lambda \cdot
\Delta \varphi \right\}dxdz =
\]
\[
=\int\limits_{D}  \int\limits_{\nu_1}^{\nu_2}\left\{
\sigma(\lambda) \;\chi(x,z;\lambda)  \cdot
\frac{\partial^2 \varphi}{\partial z^2}
-\lambda\,\sigma(\lambda) \;\chi(x,z;\lambda) \cdot
\Delta \varphi \right\} d\lambda \,dxdz =
\]
\[
= -\int\limits_{\nu_1}^{\nu_2} \int\limits_{D} \left\{
\lambda \varphi_{xx}- (1 -\lambda) \varphi_{zz}\;
 \right\}\chi(x,z;\lambda) \sigma(\lambda) dxdz\,d\lambda =
\]
\[
= -\int\limits_{\nu_1}^{\nu_2} \int\limits_{P(\lambda)}
\left\{
\lambda \varphi_{xx}- (1 -\lambda) \varphi_{zz}\;
 \right\} \sigma(\lambda) dxdz\,d\lambda=0
\]
(see the proof of Theorem \ref{teo:1.2}). It is clear that
 $\Xi=\bf{0}$ for arbitrary   $\nu_1, \nu_2 \in  (\frac 12,\, \frac 45)$ too.
Therefore there exist
a function $h \in H^1_0(D)$ such that $\Psi (\lambda)= E_{\lambda}h$, where
$E_{\lambda}$ is the resolution of the identity for $\mathbf{A}$.
As for each fixed $t>0$ every solution of the type
(\ref{typesol1}) can be presented in the form
\[
\psi(x,z,t;\lambda_*,\lambda_{**})=
\Psi (x,z;\sigma; \,\lambda_{**})-\Psi (x,z; \sigma; \,\lambda_{*}),
\]
with some appropriate $\sigma$,
then the statement of the theorem follows from the condition immediately.
\end{proof}

\begin{proposition}\label{teo:2.5r}
Denote by $H_{c}$ the closure of the linear span of
all functions of the form (\ref{13a}) in $H^1_0(D)$.
 The spectrum of the operator $\mathbf{A}$ is absolutely continuous on $H_{c}$:
$\sigma (\mathbf{A}|_{H_c})=\sigma_{ac}(\mathbf{A}|_{H_c})$.
\end{proposition}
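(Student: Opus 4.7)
The plan is to invoke the standard spectral-theoretic criterion: $H_c \subset H^{ac}(\mathbf{A})$ --- whence $\sigma(\mathbf{A}|_{H_c}) = \sigma_{ac}(\mathbf{A}|_{H_c})$ --- if and only if for every $h$ in a dense subset of $H_c$ the scalar spectral measure $\mu_h(B) := \|E(B)h\|_1^2$ is absolutely continuous with respect to Lebesgue measure. Since $H^{ac}(\mathbf{A})$ is a closed subspace and the functions $h = \int_{\lambda_*}^{\lambda_{**}}\sigma(\mu)\chi(\cdot;\mu)\,d\mu$ (with $\sigma \in C^1[\frac12,\frac45]$ and $[\lambda_*,\lambda_{**}] \subset (\frac12,\frac45)$ arbitrary) span a dense subset of $H_c$ by the very definition of $H_c$, it suffices to verify absolute continuity of $\mu_h$ for each such $h$. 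The fact that $H_c$ is indeed $\mathbf{A}$-invariant follows from the identity $\mathbf{A}h = \int \mu\,\sigma(\mu)\chi(\cdot;\mu)\,d\mu$ derived in the proof of Theorem~\ref{teo:2.5}.

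For such $h$, the proof of Theorem~\ref{teo:2.5} has already established the key identity $E_\lambda h = \Psi(\lambda)$ with $\Psi$ as in (\ref{13a}). Using that the $E_\lambda$ are orthogonal projections, for $\lambda_* \leq \lambda_1 < \lambda_2 \leq \lambda_{**}$ one obtains
\[
\mu_h\bigl((\lambda_1,\lambda_2]\bigr) = \|(E_{\lambda_2}-E_{\lambda_1})h\|_1^2 = \left\|\int_{\lambda_1}^{\lambda_2}\sigma(\mu)\chi(\cdot;\mu)\,d\mu\right\|_1^2.
\]
Hence the goal reduces to proving the Lipschitz estimate
\begin{equation}\label{Lipgoal}
\left\|\int_{\lambda_1}^{\lambda_2}\sigma(\mu)\chi(\cdot;\mu)\,d\mu\right\|_1^2 \leq C\,(\lambda_2-\lambda_1),
\end{equation}
with $C$ depending only on $\sigma$ and the geometry. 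Since $\mu_h$ is supported in the compact interval $[\lambda_*,\lambda_{**}]$, outer regularity of Lebesgue measure then upgrades (\ref{Lipgoal}) to $\mu_h(B) \leq C|B|$ for every Borel $B$, which is exactly the absolute continuity of $\mu_h$.

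To prove (\ref{Lipgoal}) I would exploit the explicit piecewise structure of $\Upsilon := \int_{\lambda_1}^{\lambda_2}\sigma(\mu)\chi(\cdot;\mu)d\mu$ afforded by (\ref{1s.10})--(\ref{1s.16}). At each $(x,z)\in D$, $\Upsilon(x,z)$ depends only on which of the values $\lambda_i(x,z)$ fall inside $(\lambda_1,\lambda_2)$ and in what order; in particular, on the open set where no $\lambda_i(x,z)$ belongs to $(\lambda_1,\lambda_2)$, $\Upsilon$ is locally constant and $\nabla\Upsilon = 0$. On the complementary ``transition'' region, setting $G(\mu) := \int_{\lambda_*}^{\mu}\sigma(s)\,ds$, a direct computation expresses $\nabla\Upsilon$ as a finite sum of terms of the form $\pm\sigma(\lambda_i(x,z))\,\nabla\lambda_i(x,z)$, giving $|\nabla\Upsilon| \leq C'(\sigma)\max_i\|\nabla\lambda_i\|_{L^\infty}$ uniformly. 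The transition region lies in $\bigcup_i\{(x,z):\lambda_i(x,z)\in[\lambda_1,\lambda_2]\}$, which by the coarea formula has Lebesgue measure $O(|\lambda_2-\lambda_1|)$; combining the two observations yields (\ref{Lipgoal}).

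The hard part is this last geometric/analytic step: verifying uniform $L^\infty$-bounds on $\nabla\lambda_i$ over the relevant compact portion of $\overline D$ (the singularities in (\ref{1s.10})--(\ref{1s.16})---zeros of discriminants or denominators---correspond to the degenerate parallelograms $P(\mu)$ at $\mu = \frac12$ or $\frac45$ and lie strictly outside this portion, so on $\{\lambda_i\in[\lambda_*,\lambda_{**}]\}$ the functions $\lambda_i$ are real-analytic with nonvanishing $\partial_\lambda z_i$, whence $\nabla\lambda_i$ is uniformly controlled via implicit differentiation), and applying the coarea formula to estimate the transition-region area uniformly in $\lambda_1,\lambda_2 \in [\lambda_*,\lambda_{**}]$. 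Though combinatorially fiddly in view of the 13 subdomains of Figure~\ref{fig:ma1}b and the four side functions $\lambda_i$, this reduces to a finite sequence of elementary estimates.
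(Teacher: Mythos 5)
Your proposal is correct and follows the same overall route as the paper's proof: both rest on the identification $E_{\lambda}h=\Psi(\lambda)$ supplied by Theorem~\ref{teo:2.5}, reduce the claim to the absolute continuity of the distribution function $\lambda\mapsto\Vert\Psi(\lambda)\Vert_{1}^2$, and then exploit the explicit piecewise form of $\Upsilon$ over the subdomains cut out by the parallelograms. The only genuine divergence is the final regularity step. The paper writes $\Vert\Upsilon(\cdot;\sigma;\lambda_*,\lambda)\Vert_{1}^2$ as a sum of integrals of $\lambda$-independent smooth integrands $|\nabla(G(\lambda_i)-G(\lambda_j))|^2$ over the moving subdomains $D_i(\lambda)$, whose boundaries depend smoothly on $\lambda$, and concludes that the distribution function is in fact smooth. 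You instead prove the Lipschitz bound $\mu_h\bigl((\lambda_1,\lambda_2]\bigr)\le C(\lambda_2-\lambda_1)$ by combining an $L^\infty$ bound on $\nabla\Upsilon$ with a coarea estimate for the measure of the transition strip between $P(\lambda_1)$ and $P(\lambda_2)$. Both arguments need the same geometric input, namely uniform two-sided control of $\nabla\lambda_i$ on the compact region where $\lambda_i\in[\lambda_*,\lambda_{**}]$ (away from the degenerations at $\mu=\frac 12$ and $\mu=\frac 45$), which you correctly isolate as the point to be verified; just note that your coarea step needs the lower bound $|\nabla\lambda_i|\ge c>0$ and not only the $L^\infty$ bound you emphasize, though your remark on the nonvanishing of $\partial_\lambda z_i$ does deliver it. What each buys: your version gives a quantitative Lipschitz constant (hence a bounded spectral density) and is more explicit about the functional-analytic reduction (density of the spanning set, $\mathbf{A}$-invariance of $H_c$, passage from intervals to Borel sets), while the paper's version yields the stronger conclusion that $\Vert\Psi(\lambda)\Vert_{1}^2$ is smooth.
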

\begin{proof}
We have to prove the absolute continuity of the function
$\Vert\Psi (\lambda)\Vert_{1}^2$.
For $\lambda_*\leq \lambda\leq \lambda_{**}$, we have:
\[
\Vert\Psi (\lambda)\Vert_{1}^2=
\Vert\Upsilon(x,z;\sigma;\lambda_*,\lambda)\Vert_{1}^2,
\]
(see (\ref{V})).
Denote by $D_i(\lambda)$ the domains corresponding to the
decomposition of $D$ by $S(\lambda_*)$ and $S(\lambda)$, $i=1,...,13$.
The properties of the function  $\Upsilon$
described in the proof of Theorem \ref{teo:1.1}
allow us to assert that
\[
\Vert\Upsilon(x,z;\sigma;\lambda_*,\lambda)\Vert_{1}^2=
\int_D \left|\nabla(\Upsilon(x,z;\sigma;\lambda_*,\lambda ))\right|^2 dx dz=
\]
\[
=\int_{D_1(\lambda)}\left|\nabla (G(\lambda_{1})-G(\lambda_{4}))\right|^2 dx dz +
\int_{D_2(\lambda)}
\left|\nabla (G(\lambda_{1})-G(\lambda_{2}))\right|^2 dx dz
\]
\[
+\int_{D_3(\lambda)}
\left|\nabla (G(\lambda_{3})-G(\lambda_{2}))\right|^2 dx dz +
\int_{D_4(\lambda)}
\left|\nabla (G(\lambda_{3})-G(\lambda_{4}))\right|^2 dx dz
\]
\[
+\sum\limits_{i=5}^8\int_{D_i(\lambda)}
\left|\nabla (G(\lambda_{i-4}))\right|^2 dx dz.
\]
Here the integrands are smooth and they
do not depend on $\lambda$, and all boundaries $\partial D_i(\lambda)$
 smoothly depend on $\lambda$. Thus the function
$\Vert\Psi (\lambda)\Vert_{1}^2$ is not only absolute continuous but smooth.
\end{proof}

The explicit form of the obtained
 inertial waves makes it possible to get animated graphics of the stream
function, of the velocity field of the fluid, of the energy density function, etc. (See Appendix)

\section{Energy properties of the solutions
$\psi(x,z,t;\lambda_*,\lambda_{**})$ and $\mathbf{U}(x,z,t;\lambda_*,\lambda_{**})$.}
All subdomains of $D$ that will be considered below are supposed to be
 Lipschitz domains (see definition in  \cite{KopachevskyKrein2001engV1}, p. 34).
\begin{definition}
\rm
Suppose that a domain $\Omega_0 \subset D$.
If a solution of the system (\ref{0gl.1p.4}--\ref{init1})
is such that for any subdomain $\Omega_1 \subset \Omega_0$ the
amount of kinetic energy  in $\Omega_1 $
\begin{equation}\label{0gl.1p.07b}
{\cal E}(t;\,\Omega_1):=\int\limits_{\Omega_1} \left(\left|u\right|^2+
\left|v\right|^2+\left|w\right|^2\right) dxdz={\rm const}
\end{equation}
(does not depend on time), then we say that the solution has the
\emph{property} $S$ in $\Omega_0$.
\footnote{
Perhaps it would be logical to call it ``standing wave in the domain $\Omega_0$'',
but this term, alas, is already used.}
\end{definition}
It is clear that each inertial mode has the property $S$ in the whole domain $D$.
\begin{definition}
\rm
If for a solution of (\ref{0gl.1p.4}--\ref{init1})
in the domain $D$ there exists a subdomain $\Omega$ such that
the amount of kinetic energy in $\Omega $
is not constant, then this solution is called a \textit{progressive} inertial wave.
\end{definition}
\begin{theorem}\label{teo:3.3}
The solution $\mathbf{U}(x,z,t;\lambda_*,\lambda_{**})$ has the property
$S$ in the domain $\bigcup\limits_{i=5}^{13} D_i $.
\end{theorem}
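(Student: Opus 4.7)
The plan is first to dispose of the easy part: by Theorem~\ref{teo:2.1}, $\mathbf{U}\equiv\mathbf{0}$ on $D_9,\ldots,D_{13}$, so those subdomains contribute nothing to the kinetic energy at any time. The substance of the argument therefore concerns $D_5, D_6, D_7, D_8$, and I would actually establish the stronger statement that the energy density $|u|^2+|v|^2+|w|^2$ is pointwise independent of $t$ throughout these four subdomains; the constancy of $\mathcal{E}(t,\Omega_1)$ for every Lipschitz $\Omega_1\subset\bigcup_{i=5}^{13}D_i$ then follows immediately by integration.

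In each $D_i$ with $i\in\{5,6,7,8\}$, the same bookkeeping for $\chi$ that appears in the proof of Proposition~\ref{teo:2.5r} shows that $\psi(x,z,t)=F(\lambda_{i-4}(x,z),t)$, where $F(\lambda,t)=\int_{\lambda_*}^{\lambda}\sigma_0(\mu)\cos(\sqrt{\mu}\,t)\,d\mu$ (recall $\sigma_1\equiv 0$). Writing $\lambda:=\lambda_{i-4}(x,z)$, formulas (\ref{1s.50}) give
\[
u=-\sigma_0(\lambda)\cos(\sqrt{\lambda}\,t)\,\lambda_z,\qquad w=\sigma_0(\lambda)\cos(\sqrt{\lambda}\,t)\,\lambda_x,\qquad v=\sigma_0(\lambda)\,\frac{\sin(\sqrt{\lambda}\,t)}{\sqrt{\lambda}}\,\lambda_z,
\]
so the energy density reads
\[
|u|^2+|v|^2+|w|^2=\sigma_0(\lambda)^2\!\left[\cos^2(\sqrt{\lambda}\,t)\,(\lambda_x^2+\lambda_z^2)+\sin^2(\sqrt{\lambda}\,t)\,\frac{\lambda_z^2}{\lambda}\right].
\]

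The crux of the argument is then the geometric identity $\lambda_z^2=a^2\lambda_x^2$, with $a^2=\lambda/(1-\lambda)$, which I claim holds at every point of $D_5,\ldots,D_8$. I would derive it by implicit differentiation of the relation $z_j(x,\lambda_j(x,z))\equiv z$ noted in the proof of Theorem~\ref{teo:1.1}: each side of $S(\lambda)$ is a characteristic of~(\ref{10}) with slope $\pm 1/a$, so $(z_j)_x=\pm 1/a$, and hence
\[
\frac{(\lambda_j)_x}{(\lambda_j)_z}=-(z_j)_x=\mp\frac{1}{a},
\]
and squaring eliminates the sign ambiguity. (As a cross-check, the identity can also be verified by direct calculation from the explicit formulas (\ref{1s.10})--(\ref{1s.16}).) Using this identity, both coefficients in the bracket above reduce to $\lambda_x^2/(1-\lambda)=\lambda_z^2/\lambda$, so the bracket collapses to $(\cos^2+\sin^2)\cdot\lambda_x^2/(1-\lambda)$, and all $t$-dependence disappears.

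The main obstacle is really only bookkeeping: one must verify that on each $D_i$ with $i\in\{5,6,7,8\}$ the stream function truly depends on a single one of the inverse functions $\lambda_1,\ldots,\lambda_4$, and then correctly pair each subdomain with the side of $S(\lambda)$ from which that inverse function arises. This matching is visible from Figure~\ref{fig:ma2s} and the decomposition used in the proof of Proposition~\ref{teo:2.5r}; once it is recorded, the pointwise computation above closes the proof.
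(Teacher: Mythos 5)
Your proposal is correct and follows essentially the same route as the paper: both reduce the problem to $D_5,\dots,D_8$ via Theorem~\ref{teo:2.1}, express $u,v,w$ there through $F(\lambda_{i-4},t)$ and the gradient of $\lambda_{i-4}$, and close the argument with the identity $\lambda_z^2=a^2\lambda_x^2$ (the paper obtains it by implicit differentiation in the slope variable $a$, you by differentiating $z_j(x,\lambda_j)\equiv z$ directly, which is the same computation). The only cosmetic difference is that you show the energy density is pointwise $t$-independent, while the paper shows its time derivative $\Lambda=uu_t+vv_t+ww_t$ vanishes.
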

\begin{proof}
It is sufficient to prove that this solution has the property
$S$ in the domains $D_i$, $i=5,...,8$. Consider, for example,
the amount of kinetic energy corresponding to
some domain $\Omega_5\subset D_5$:
\begin{equation}\label{D5}
{\cal E}(t;\,\Omega_5)=\int_{\Omega_5} \left(u^2+
v^2+w^2\right) dxdz.
\end{equation}
Denote
\begin{equation}\label{dertdens}
\Lambda:=uu_t+vv_t+ww_t.
\end{equation}
We now show  that $\Lambda\equiv 0$ in $D_5$.
Consider the primitive function
\[
F(\lambda,t):=\int_{\lambda_*}^{\lambda}\sigma_0(\mu)\cos (\sqrt{\mu}\,t)\,d\mu.
\]
Then in the domain $D_5$
\[
u=-\bigl(F(\lambda_1,t)-F(\lambda_*,t)\bigr)_z=-\frac{\partial F( \lambda_1, t)}{\partial \lambda_1}\cdot
\frac{\partial \lambda_1}{\partial z},\quad
u_t=-\frac{\partial^2 F( \lambda_1, t)}{\partial t\, \partial \lambda_1}\cdot
\frac{\partial \lambda_1}{\partial z},
\]
\[
v=\int_{0}^{t}\frac{\partial F( \lambda_1, \tau)}{\partial \lambda_1}d\,\tau \cdot
\frac{\partial \lambda_1}{\partial z},\qquad
v_t=\frac{\partial F ( \lambda_1, t)}{ \partial \lambda_1}\cdot
\frac{\partial \lambda_1}{\partial z},
\]
\[
w=\frac{\partial F( \lambda_1, t)}{\partial \lambda_1}\cdot
\frac{\partial \lambda_1}{\partial x},\qquad
w_t=\frac{\partial^2 F( \lambda_1, t)}{\partial t\, \partial \lambda_1}\cdot
\frac{\partial \lambda_1}{\partial x},
\]
(see (\ref{1s.50}), Theorem \ref{teo:1.2}, and Corollary \ref{teo:1.3}).
We have
\[
\frac{\partial^2 F( \lambda_1, t)}{\partial t\, \partial \lambda_1}=
\bigl(\sigma_0(\lambda_1)\cos (\sqrt{\lambda_1}\,t)\bigr)_t=
-\sigma_0(\lambda_1)\sqrt{\lambda_1}  \sin (\sqrt{\lambda_1}t),
\]
\[
\int_{0}^{t}\frac{\partial F( \lambda_1, \tau)}{\partial \lambda_1}d\,\tau=
\int_{0}^{t}\sigma_0(\lambda_1)\cos (\sqrt{\lambda_1}\, \tau)\,d\,\tau=
\frac{\sigma_0(\lambda_1)}{\sqrt{\lambda_1} } \sin (\sqrt{\lambda_1}t).
\]
Therefore
\begin{equation}\label{Lambda}
\Lambda =\frac{\partial F( \lambda_1, t)}{\partial \lambda_1}\cdot{\sqrt{\lambda_1} }
\sigma_0(\lambda_1)\sin (\sqrt{\lambda_1}t)
\biggl[ \left(\frac{\partial \lambda_1}{\partial z}\right)^2\cdot \frac {1-\lambda_1}{\lambda_1}-
\left(\frac{\partial \lambda_1}{\partial x}\right)^2\biggr].
\end{equation}
According to the definition of the function $\lambda_1(x,z)$
\[
z_1(x,\lambda_1(x,z))\equiv z, \quad \mbox{where}\quad  z_1(x,\lambda)=1+\frac xa+\frac 1a -a, \quad a^2
=\frac {\lambda}{1-\lambda}.
\]
Denote
\[
M(x,z,a):=az-x-a-1+a^2.
\]
Using the implicit function theorem we have
\[
a_x=-\frac{\frac{\partial M}{\partial x}}{\frac{\partial M}{\partial a}}=-\frac{-1}{z-1+2a},
\quad
a_z=-\frac{\frac{\partial M}{\partial z}}{\frac{\partial M}{\partial a}}=-\frac{a}{z-1+2a}.
\]
As
\[
\frac{\partial\lambda}{\partial a}=\frac {2a}{(1+a^2)^2},
\]
we get
\[
\frac{\partial \lambda_1}{\partial x}=
\frac{\partial \lambda_1}{\partial a} \frac{\partial a}{\partial x}=
\left.\frac {2a}{(1+a^2)^2}\cdot \frac{1}{z-1+2a}\right|_{\lambda=\lambda_1},
\]
\[
\frac{\partial \lambda_1}{\partial z}=
\frac{\partial \lambda_1}{\partial a}\frac{\partial a}{\partial z}=
\left.\frac {2a}{(1+a^2)^2} \cdot \frac{-a}{z-1+2a}\right|_{\lambda=\lambda_1}.
\]
Therefore, the expression in square brackets in (\ref{Lambda})
is equal to zero; respectively $\Lambda\equiv 0$ in $D_5$; and
the amount ${\cal E}(t;\,\Omega_5)$ of kinetic energy in $\Omega_5$ does not depend on time.
\end{proof}
\begin{theorem}
\begin{equation}\label{1s.53}
{\cal E}(t;\, \bigcup\limits_{i=1}^{4} D_i)=\sum\limits_{i=1}^{4} \int\limits_{D_i} \left(\left|u\right|^2+
\left|v\right|^2+\left|w\right|^2\right)  dxdz={\rm const}.
\end{equation}
\end{theorem}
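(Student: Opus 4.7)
The plan is to use the global conservation of kinetic energy together with the information we already have about the energy distribution in the remaining part of $D$. Specifically, I would combine the identity (\ref{energyLaw2dim}) with Theorems \ref{teo:2.1} and \ref{teo:3.3} to isolate the energy stored in $\bigcup_{i=1}^{4} D_i$.

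First I would observe that the subdomains $D_i$ obtained from the decomposition of $D$ by $S(\lambda_*)$ and $S(\lambda_{**})$ satisfy
\[
D = \left(\bigcup_{i=1}^{4} D_i\right) \sqcup \left(\bigcup_{i=5}^{13} D_i\right)
\]
up to a set of measure zero (the pieces of $S(\lambda_*)\cup S(\lambda_{**})$), so the integral in (\ref{energyLaw2dim}) splits additively:
\[
\mathcal{E}(t,D) = \mathcal{E}\!\left(t;\,\bigcup_{i=1}^{4} D_i\right) + \mathcal{E}\!\left(t;\,\bigcup_{i=5}^{13} D_i\right).
\]

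Next I would invoke the conservation of kinetic energy (\ref{energyLaw2dim}) for the solution $\mathbf{U}(x,z,t;\lambda_*,\lambda_{**})$ to assert that the left-hand side is time-independent. By Theorem \ref{teo:3.3}, the second term on the right-hand side is also time-independent, since property $S$ in $\bigcup_{i=5}^{13} D_i$ means in particular that the energy stored in that whole union is a constant of motion (one may even split it further: Theorem \ref{teo:2.1} gives zero energy in $D_9,\ldots,D_{13}$, and Theorem \ref{teo:3.3}, applied to each of $D_5,\ldots,D_8$, gives a constant amount there). Subtracting the two constants yields
\[
\mathcal{E}\!\left(t;\,\bigcup_{i=1}^{4} D_i\right) = \mathcal{E}(t,D) - \mathcal{E}\!\left(t;\,\bigcup_{i=5}^{13} D_i\right) = \mathrm{const},
\]
which is exactly (\ref{1s.53}).

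There is essentially no obstacle: the proof is just an arithmetic consequence of the previously established facts. The only minor subtlety worth a sentence is the justification of additivity, i.e.\ checking that $\partial D_i$ contributes nothing to the integrals—this is immediate because $S(\lambda_*)$ and $S(\lambda_{**})$ are piecewise smooth curves of Lebesgue measure zero in the plane, and the components of $\mathbf{U}$ belong to $L_2(D)$ by (\ref{1s.50}), so the integrals over $D_i$ are unambiguously defined and add up to the integral over $D$.
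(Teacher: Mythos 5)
Your proposal is correct and follows essentially the same route as the paper: the paper's own proof simply notes that the statement follows immediately from the energy conservation law (\ref{energyLaw2dim}) and Theorem \ref{teo:3.3}, which is exactly the subtraction argument you spell out. Your additional remarks on the additivity of the integrals over the decomposition are a harmless elaboration of what the paper leaves implicit.
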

\begin{proof}
The statement  immediately follows from the energy conservation law (\ref{energyLaw2dim}) and
Theorem~\ref{teo:3.3}.
\end{proof}
However, it is easy to establish that within the domains $ D_i $, $i=1,2,3,4$,
the amount of the kinetic energy is not constant.
Thus, the inertial wave $\mathbf{U}(x,z,t;\lambda_*,\lambda_{**})$ is progressive.  It
``pumps'' energy from $D_i$ to $D_j$, when $i,j=1,2,3,4$, $i\neq j$,
but in other areas it behaves like a standing wave.
\begin{remark}\label{rk:3.5}
For all $t>0$ the energy density function is piecewise smooth in $D$ with respect
to spatial variables, and there is no point
in $\overline{D}$ in which it tends to infinity as  $t \rightarrow \infty$.
\end{remark}

\section{Discussion: which wave attractors do exist in the trapezoid?}
As it was mentioned above, papers of many authors studying inertial waves are devoted to the
emergence of wave attractors. But what are  \textit{wave attractors}?
A rigorous definition of a wave attractor does not exist until now,
but from many recent papers in Geophysics and Astrophysics it becomes clear that
the researchers mean that in the points of a wave attractor
the energy density of a fluid tends to infinity
(in some sense) as time goes to infinity.

 It follows from \cite{TroFirst99}  that
for the considered above trapezoid $D$ (see (\ref{trapezoidD}))
 there is no inertial mode
 corresponding to the frequency interval $\lambda \in (0, \frac 12)$.
\begin{figure}[ht]
\begin{center}
\includegraphics[height=4cm]{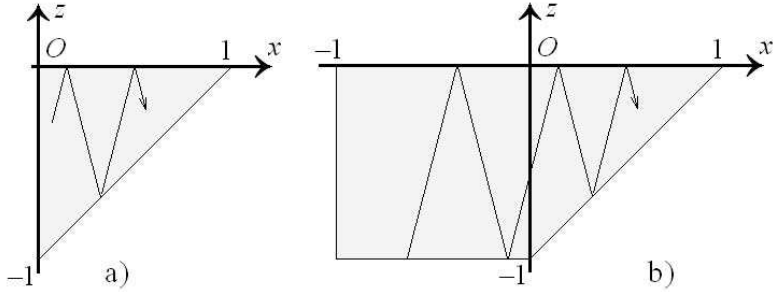}
\caption{a) In this triangle, the vertex (1,\,0) is a point wave attractor (see \cite{TroitRGMF10}).
b) In the trapezoid $D$, the vertex (1,\,0) is a possible point wave attractor.
}
\label{fig:ma6n}
\end{center}
\end{figure}
Most likely that for the trapezoid $D$, it is possible to construct a class of inertial waves, whose energy
in process of time will be concentrated in an arbitrarily small neighborhood of the vertex
{(1,\,0)} (see Figure \ref{fig:ma6n}b)), as it was done in the case of the triangle domain in
\cite{TroitRGMF10}. Therefore, with a high degree of certainty,
we may claim that the vertex {(1,\,0)} is a \textit{point wave attractor}
(of course, this fact requires a careful mathematical proof).

Now suppose $\lambda \in (\frac 12, \frac 45)$.
There are a lot of papers in Geophysics  where
the existence of wave attractors of the form
$S(\lambda)$ in the trapezoid $D$ is claimed
or is considered as a proven fact
(see, e.g.,
\cite{
MaasBenielliNature97,%
MaasFlMech2001,%
StaquetSommeria2002,%
MandersMaas2003,%
MandersMaas2004,%
MaasChaos2005,%
HazewinkelMaasDalziel2007,%
SwartSleijpenMaasBrandts2007,%
HarlanMaas07,%
Harland08p1,%
GerkemaZimmerman2008,%
GerkZimmMaasHaren2008,%
GrisouardStaquetPairaud2008,%
HazewinkelBreevoortDalzielMaas2008,%
LamMaas2008,%
Maas2009PhysD,%
HazewinkelTsimitriMaasDalziel2010,%
HazewinkelGrisouardDalziel2011,%
ScolanErmDauxois2013,%
GerkemaMaas2013,%
BrouzetErmDauxois2016}).

Alas, our opinion differs from the opinion of these authors.
Namely, we are convinced that as a result of their experiments the researchers
observed the oscillations of the fluid corresponding
to the solutions of the described above type for some
close to each other values $\lambda_*$ and $\lambda_{**}$ (see, for example,
\textbf{Figure \ref{fig:ma7n}}).
As it was shown above (Remarks \ref{rk:2.2}, \ref{rk:3.5}), these solutions
certainly do not create attractors.
\begin{figure}[ht]
\begin{center}
\includegraphics[height=9.5cm]{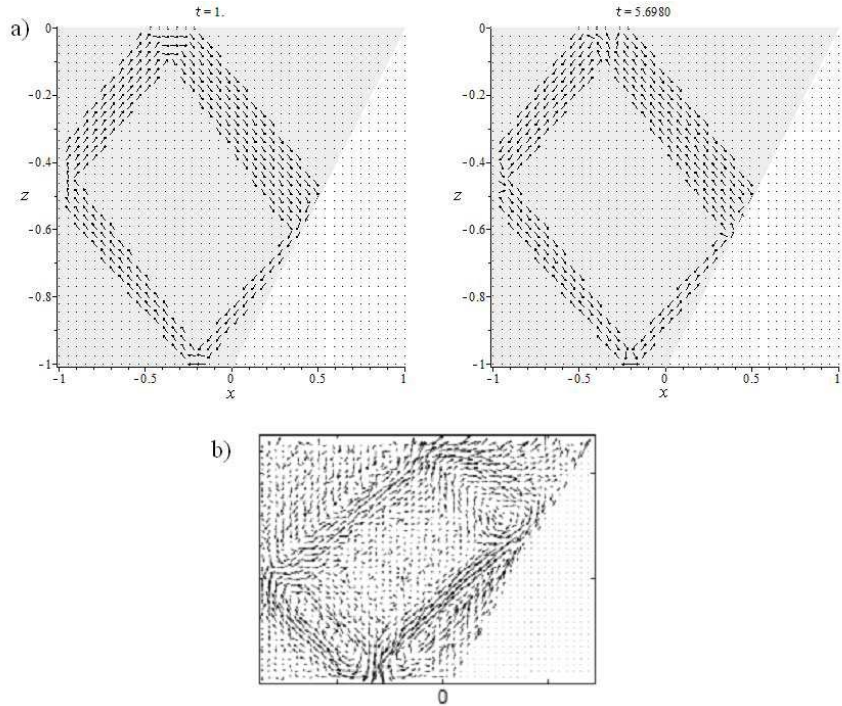}
\caption{
One can compare our field-plots of $[u,\, w]$ at different moments of time (see a))
with the velocity field in a stratified fluid in some moment (see b)) that was computed  on the base of
experimental measurements made by Particle Image Velocimetry
in \cite{HazewinkelMaasDalziel2007} (the original Figure 3a
in \cite{HazewinkelMaasDalziel2007}).}
\label{fig:ma7n}
\end{center}
\end{figure}

Of course, the question on the existence of inertial waves that can concentrate
their energy in an arbitrarily small neighborhood of $S(\lambda)$ currently remains open,
because to deny  them one must investigate some special properties of the constructed above
class of solutions, for example, their completeness in certain spaces of functions, etc.,
but these issues are complex and unlikely to be resolved soon.

\section*{Acknowledgments}
I would like to thank B. I. Sadovnikov for the helpful discussions at the beginning
of the present research and the participants of the International Conference
``Turbulence and Wave Processes'' (Moscow State University, 2013) for useful discussions of the obtained
results; and I also thank A.~E.~Troitskaya and V.~D.~Rusakov
for their help in the preparation of the graphs of the obtained solutions.

\begin{appendix}
\section{Some graphs}
The basic properties of the obtained solutions of (\ref{0gl.1p.4}--\ref{init1})
were described above, but taking into account that these solutions
are not almost-periodic functions, their particular properties
are also very interesting and
should be studied much more carefully. Here we present only
 a few simple graphs
of these functions at certain times\footnote{
All graphs were generated using the package Maple 16.
}.
We set
\[
\lambda_*=0.65,\quad \lambda_{**}=0.7
\]
and choose the function presented at the Figure \ref{fig:ma3s} as the solution $\psi(x,z,t)$;
the other graphs correspond to this solution too.
The graphs of the function $\psi(x,z,t)$ for small values of $t$
(up to 25) and for sufficiently large $t$ (up to 2500)
are presented separately.
This allows to notice a substantial difference in the behavior of the function
in these two cases.
Although we obtained the solutions of the linearized system of equations,
the study of their behavior should shed light on the formation of turbulent flows
in containers having the described above configuration:
as it can be seen from the graphs, the large-scale oscillations of the rotating fluid
are converted to the vortex motions, the scale of which decreases when $t \to \infty$;
respectively, the energy of  the initial  large-scale perturbation
is redistributed among the
micro-scale spatial fluctuations which is typical for turbulent flows.

\begin{figure}[ht]
\begin{center}
\includegraphics[height=12.5cm]{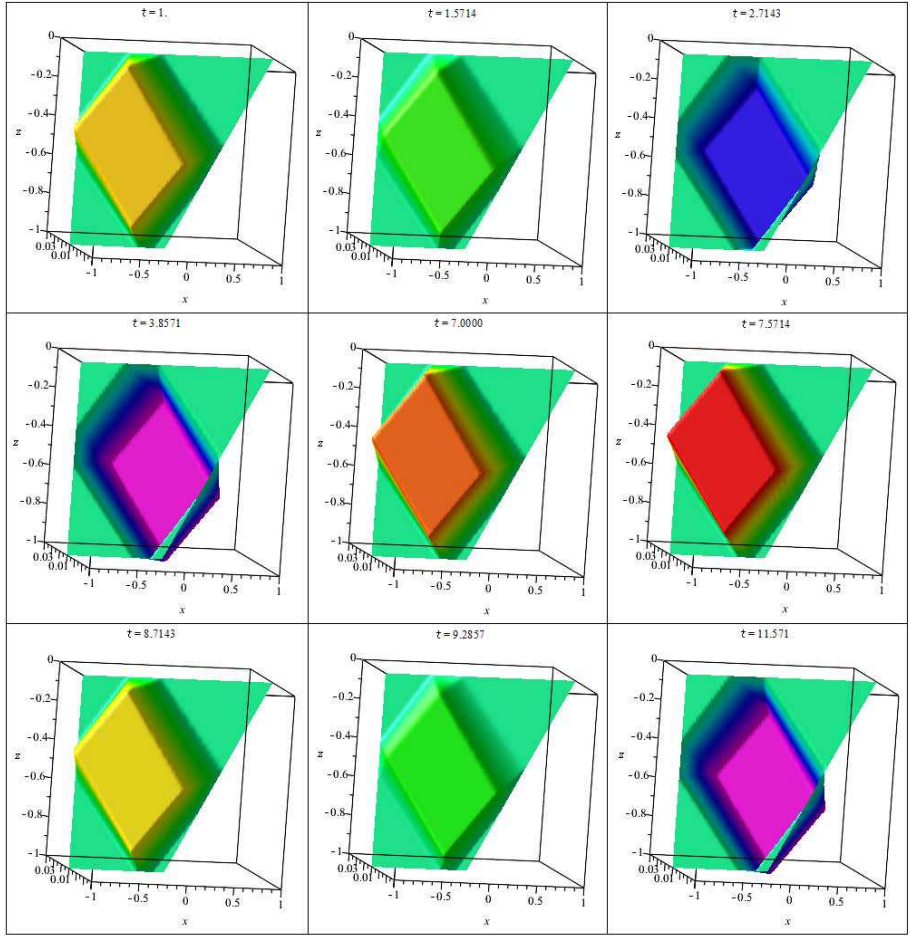}
\caption{The unconstrained 3d-plot of the stream function $\psi (x,z,t;0.65,\,0.7)$ at
some relatively small values of $t$.
}
\label{fig:ma8}
\end{center}
\end{figure}

\begin{figure}[ht]
\begin{center}
\includegraphics[height=12.5cm]{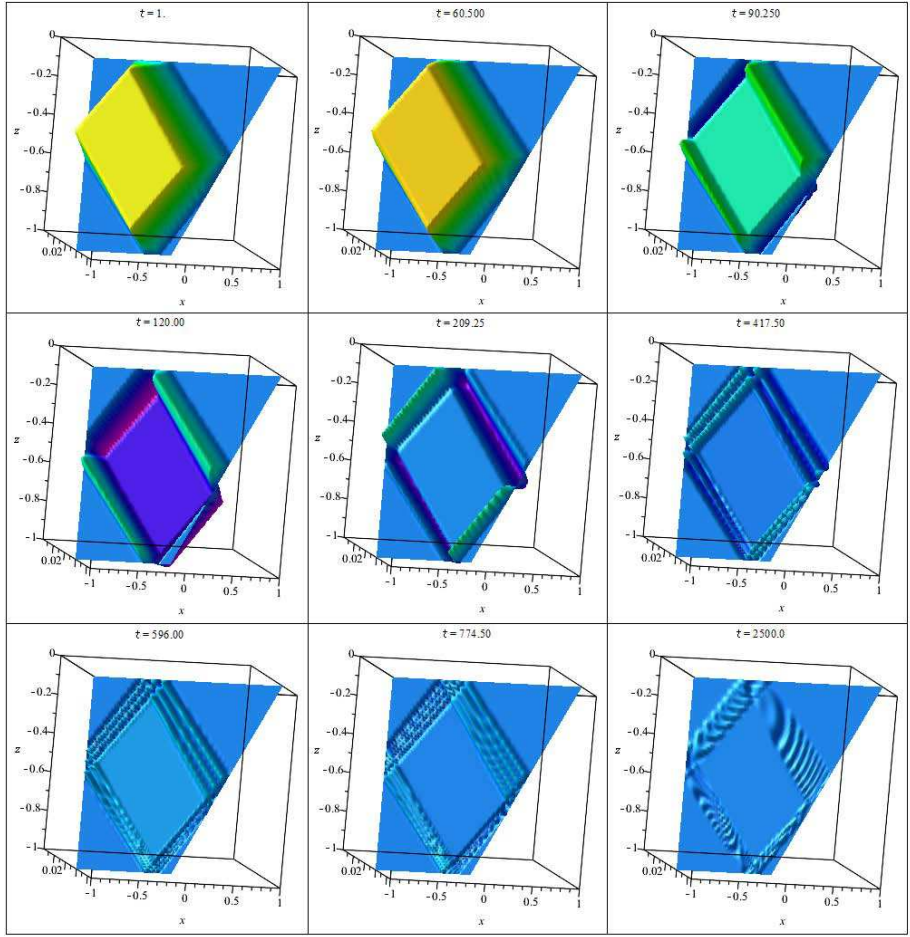}
\caption{
The unconstrained 3d-plot of the stream function $\psi (x,z,t;0.65,\,0.7)$ at some time moments.
}
\label{fig:ma8}
\end{center}
\end{figure}

\begin{figure}[ht]
\begin{center}
\includegraphics[height=13cm]{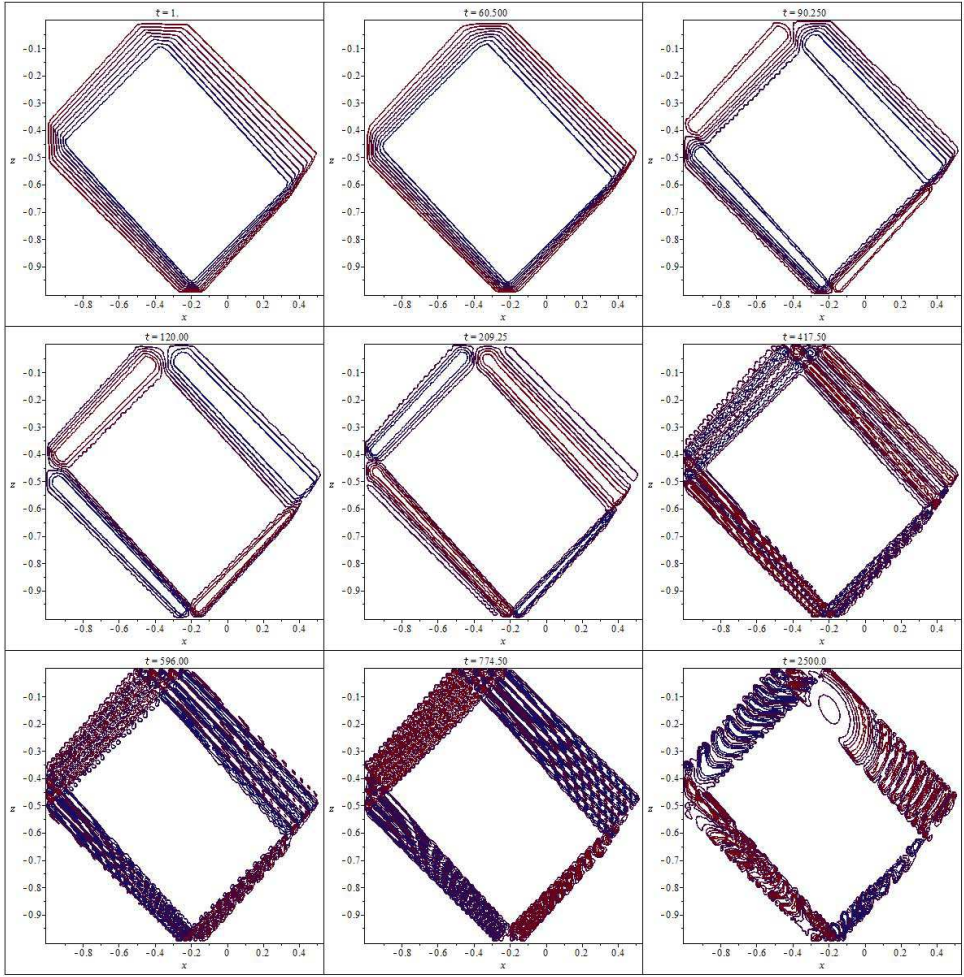}
\caption{The contour-plot of the stream function $\psi (x,z,t;0.65,\,0.7)$ at
some time moments.
}
\label{fig:ma8}
\end{center}
\end{figure}

\begin{figure}[ht]
\begin{center}
\includegraphics[height=13cm]{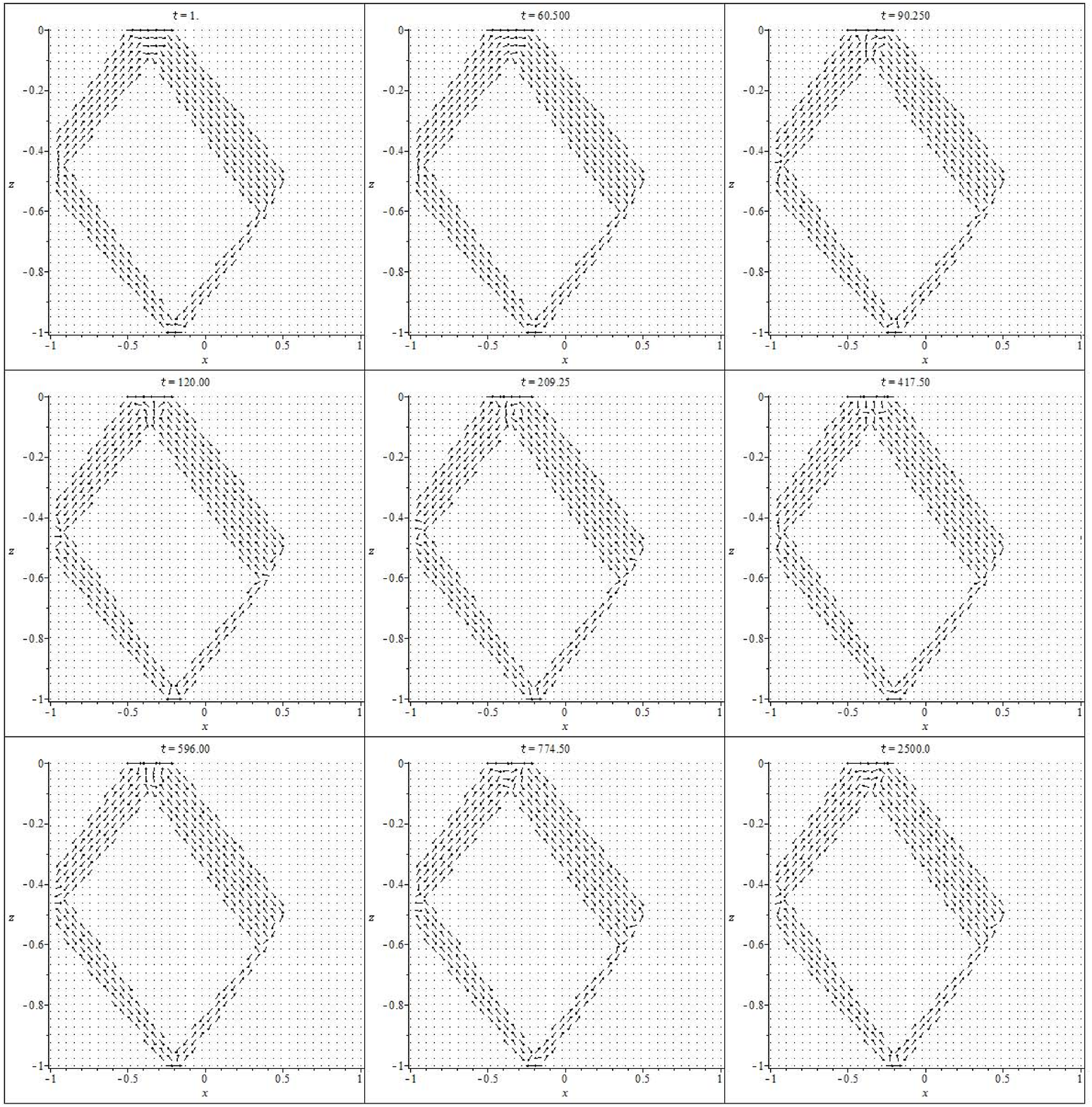}
\caption{The field-plot $[u(x,z,t), w(x,z,t)]$ at the different values of the variable $t$.}
\label{fig:ma8}
\end{center}
\end{figure}

\begin{figure}[ht]
\begin{center}
\includegraphics[height=13cm]{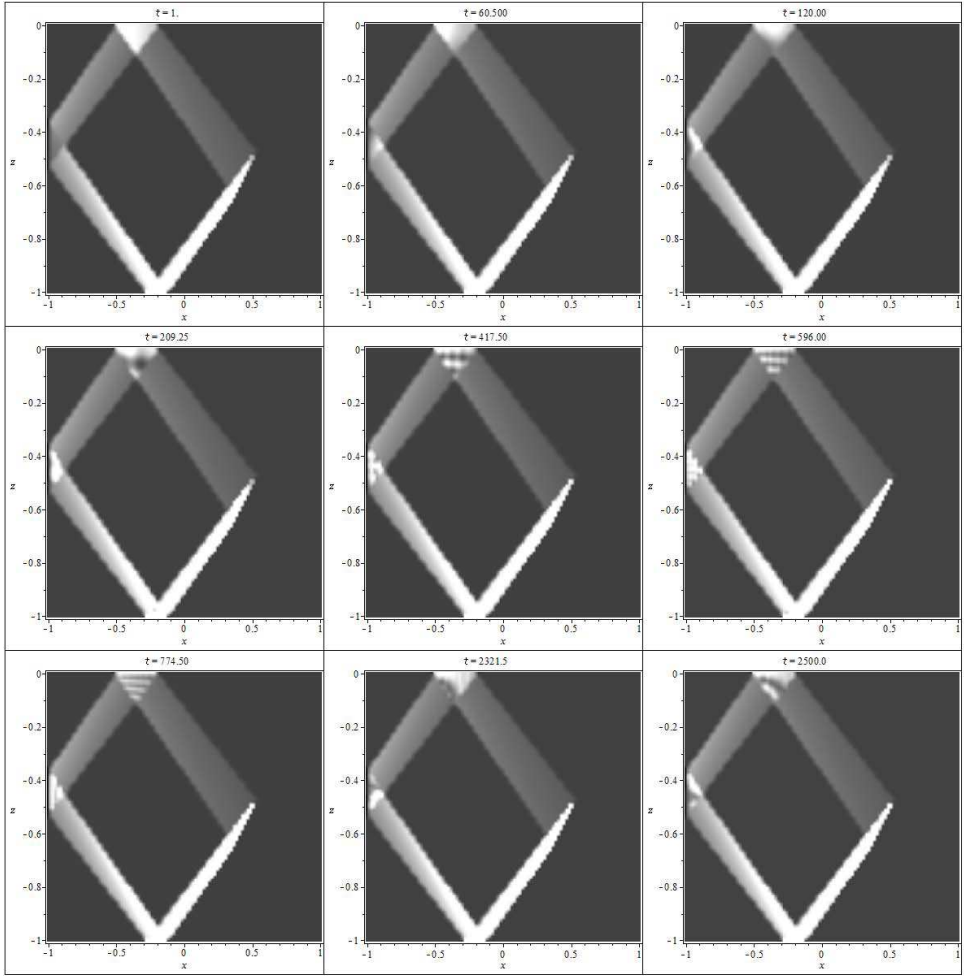}
\caption{
The plot of the energy density of the velocity field $\mathbf{U}$ at some time moments.
Note, please, that this function changes as time goes on only inside 4
small triangles and in the rest of the domain $D$ it remains constant.
}
\label{fig:ma8}
\end{center}
\end{figure}

\end{appendix}

\providecommand{\bysame}{\leavevmode\hbox to3em{\hrulefill}\thinspace}
\providecommand{\MR}{\relax\ifhmode\unskip\space\fi MR }
\providecommand{\MRhref}[2]{%
  \href{http://www.ams.org/mathscinet-getitem?mr=#1}{#2}
}
\providecommand{\href}[2]{#2}


\end{document}